\def\inftyn #1{\left\|#1\right\|_{\infty}}
\def\twon #1{\left\|#1\right\|_2}
\def\onen #1{\left\|#1\right\|_1}
\def\zeron #1{\left\|#1\right\|_0}
\def\frobn #1{\left\|#1\right\|_{\text{F}}}
\def\abs #1{\left|#1\right|}
\def\inp #1{\left\langle#1\right\rangle}
\def\st{\textrm{ subject to }}
\def\bC{\mathbb{C}}
\def\bR{\mathbb{R}}
\def\m #1{\boldsymbol{#1}}
\def\cC{\mathcal{C}}
\def\cD{\mathcal{D}}
\def\cI{\mathcal{I}}
\def\bee{\begin{equation}}
\def\ene{\end{equation}}
\def\beq{\begin{eqnarray}}
\def\enq{\end{eqnarray}}
\def\lentwo{\setlength\arraycolsep{2pt}}
\newtheorem{lem}{Lemma}
\newtheorem{rem}{Remark}
\newtheorem{cor}{Corollary}
\newtheorem{thm}{Theorem}
\newtheorem{defi}{Definition}
\def\equ #1{\begin{equation}#1\end{equation}}
\def\equa #1{\begin{eqnarray}#1\end{eqnarray}}
\def\sbra #1{\left(#1\right)}
\def\mbra #1{\left[#1\right]}
\def\lbra #1{\left\{#1\right\}}
\def\diag #1{\text{diag}#1}
\def\st {\text{ subject to }}
\title{Robustly Stable Signal Recovery in Compressed Sensing with Structured Matrix Perturbation}
\author{Zai Yang, Cishen Zhang, and Lihua Xie$^*$, \emph{Fellow, IEEE}

\thanks{$^*$Author for correspondence.

Z. Yang and L. Xie are with EXQUISITUS, Centre for E-City, School of Electrical and Electronic Engineering, Nanyang Technological University, 639798, Singapore (e-mail: yang0248@e.ntu.edu.sg; elhxie@ntu.edu.sg).

C. Zhang is with the Faculty of Engineering and Industrial Sciences, Swinburne University of Technology, Hawthorn VIC 3122, Australia (e-mail: cishenzhang@swin.edu.au).}}
\begin{document}
\maketitle

\begin{abstract}
The sparse signal recovery in the standard compressed sensing (CS) problem requires that the sensing matrix be known {\em a priori}. Such an ideal assumption may not be met in practical applications where various errors and fluctuations exist in the sensing instruments. This paper considers the problem of compressed sensing subject to a structured perturbation in the sensing matrix. Under mild conditions, it is shown that a sparse signal can be recovered by $\ell_1$ minimization and the recovery error is at most proportional to the measurement noise level, which is similar to the standard CS result. In the special noise free case, the recovery is exact provided that the signal is sufficiently sparse with respect to the perturbation level. The formulated structured sensing matrix perturbation is applicable to the direction of arrival estimation problem, so has practical relevance.  Algorithms are proposed to implement the $\ell_1$ minimization problem and numerical simulations are carried out to verify the result obtained.
\end{abstract}

\begin{IEEEkeywords}
Compressed sensing, structured matrix perturbation, stable signal recovery, alternating algorithm, direction of arrival estimation.
\end{IEEEkeywords}


\section{Introduction}
Compressed sensing (CS) has been a very active research area since
the pioneering works of Cand{\`e}s \emph{et al.}
\cite{candes2006stable,candes2006robust} and Donoho
\cite{donoho2006compressed}. In CS, a signal $\m{x}^o\in\bR^n$ of
length $n$ is called $k$-sparse if it has at most $k$ nonzero
entries, and it is called compressible if its entries obey a power
law \equ{\abs{x^o}_{\sbra{j}}\leq C_qi^{-q},\label{formu:powerlaw}}
where $\abs{x^o}_{\sbra{j}}$ is the $j$th largest entry (in absolute
value) of $\m{x}^o$
($\abs{x^o}_{\sbra{1}}\geq\abs{x^o}_{\sbra{2}}\geq\cdots\geq\abs{x^o}_{\sbra{n}}$),
$q>1$ and $C_q$ is a constant that depends only on $q$. Let
$\m{x}^k$ be a vector that keeps the $k$ largest
entries (in absolute value) of $\m{x}^o$ with the rest being zeros. If $\m{x}^o$ is
compressible, then it can be well approximated by the sparse signal
$\m{x}^k$ in the sense that \equ{\twon{\m{x}^o-\m{x}^k}\leq C'_q
k^{-q+1/2}\label{formu:2normerror_compressible}} where $C'_q$ is a
constant. To obtain the knowledge of $\m{x}^o$, CS acquires
linear measurements of $\m{x}^o$ as
\equ{\m{y}=\m{\Phi}\m{x}^o+\m{e},\label{formu:observationmodel}} where
$\m{\Phi}\in\bR^{m\times n}$ is the sensing matrix (or linear operator)
with typically $k<m\ll n$, $\m{y}\in\bR^m$ is the vector of
measurements, and $\m{e}\in\bR^m$ denotes the vector of measurement
noises with bounded energy, i.e., $\twon{\m{e}}\leq\epsilon$ for $\epsilon>0$. Given $\m{\Phi}$
and $\epsilon$, the task of CS is to recover $\m{x}^o$ from a
significantly reduced number of measurements $\m{y}$. Cand{\`e}s
\emph{et al.} \cite{candes2006stable,candes2008restricted} show that
if $\m{x}^o$ is sparse, then it can be stably recovered under mild
conditions on $\m{\Phi}$ with the recovery error being at most
proportional to the measurement noise level $\epsilon$ by solving an
$\ell_1$ minimization problem. Similarly, the largest entries (in absolute value) of a compressible signal can be stably recovered. More details are presented in
Subsection \ref{sec:standardCS}. In addition to the $\ell_1$
minimization, other approaches that provide similar guarantees are
also reported thereafter, such as IHT \cite{blumensath2009iterative}
and greedy pursuit methods including OMP\cite{tropp2007signal},
StOMP\cite{donoho2006sparse} and CoSaMP\cite{needell2009cosamp}.

The sensing matrix $\m{\Phi}$ is assumed known {\em a priori}
in standard CS, which is, however, not always the case in practical
situations. For example, a matrix perturbation can be caused by
quantization during implementation. In source separation
\cite{blumensath2007compressed,xu2009compressed} the sensing matrix
(or mixing system) is usually unknown and needs to be estimated, and
thus estimation errors exist. In source localization such as
direction of arrival (DOA) estimation
\cite{zhang2011directions,yang2011off} and radar
imaging\cite{herman2009high,zhang2009achieving}, the sensing matrix
(overcomplete dictionary) is constructed via discretizing one or
more continuous parameters, and errors exist typically in the
sensing matrix since the true source locations may not be exactly on
a discretized sampling grid.

There have been recent
active studies on the CS problem where the sensing matrix is unknown or
subject to an unknown perturbation. Gleichman and Eldar \cite{gleichman2011blind} introduce a concept named as blind CS where the sensing matrix is assumed unknown.\footnote{The CS problem formulation in \cite{gleichman2011blind} is a little different from that in (\ref{formu:observationmodel}). In \cite{gleichman2011blind}, the signal of interest is assumed to be sparse in a sparsity basis while the sparsity basis is absorbed in the sensing matrix $\m{\Phi}$ in our formulation. The sparsity basis is assumed unknown in \cite{gleichman2011blind} that leads to an unknown sensing matrix in our formulation.} In order for the measurements $\m{y}$ to determine a unique sparse solution, three additional constraints on $\m{\Phi}$ are studied individually and sufficient conditions are provided to guarantee the uniqueness. Herman and Strohmer
\cite{herman2010general} analyze the effect of a general matrix
perturbation and show that the signal recovery is robust to the
perturbation in the sense that the recovery error grows linearly
with the perturbation level. Similar robust recovery results are
also reported in \cite{herman2010mixed,chi2011sensitivity}. It is demonstrated in \cite{chae2010effects,chi2011sensitivity} that the signal recovery may suffer from a large error under a large perturbation. In addition, the
existence of recovery error caused by the perturbed sensing matrix
is independent of the sparsity of the original signal. Algorithms
have also been proposed to deal with sensing matrix perturbations.
Zhu \emph{et al.} \cite{zhu2011sparsity} propose a sparse total
least-squares approach to alleviating the effect of perturbation
where they explore the structure of the perturbation to improve
recovery performance. Yang \emph{et al.}\cite{yang2011off} formulate
the off-grid DOA estimation problem from a sparse Bayesian inference
perspective and iteratively recover the source signal and the matrix
perturbation. It is noted that existing algorithmic results
provide no guarantees on signal recovery accuracy when there
exist perturbations in the sensing matrix.

This paper is on the perturbed CS problem. A structured matrix perturbation is studied with each
column of the perturbation matrix being a (unknown) constant times a
(known) vector which defines the direction of perturbation. For certain structured
matrix perturbation, we provide conditions for guaranteed signal recovery
performance. Our analysis shows that robust stability (see definition in Subsection \ref{sec:definition}) can be achieved for a sparse signal under similar mild conditions as those for standard CS problem by solving an $\ell_1$ minimization problem
incorporated with the perturbation structure. In the special noise
free case, the recovery is exact for a sufficiently sparse signal
with respect to the perturbation level. A similar result holds for a compressible signal under an additional assumption of small perturbation (depending on
the number of largest entries to be recovered). A practical application problem, the off-grid DOA estimation, is further considered. It can be formulated into our proposed signal recovery problem subject to the structured sensing matrix perturbation, showing the practical relevance of our proposed problem and solution. To verify the obtained results, two algorithms for positive-valued and general signals respectively are proposed to solve the resulting nonconvex $\ell_1$ minimization problem. Numerical simulations confirm our robustly stable signal recovery results.


A common approach in CS to signal recovery is solving an
optimization problem, e.g., $\ell_1$ minimization. In this connection, another
contribution of this paper is to characterize a set of solutions to
the optimization problem that can be good estimates of the signal to
be recovered, which indicates that it is not necessary to obtain the
optimal solution to the optimization problem. This is helpful to
assess the ``effectiveness'' of an algorithm (see definition in
Subsection \ref{sec:Concept_Effectiveness}), for example, the
$\ell_p$ ($p<1$) minimization
\cite{chartrand2008restricted,saab2008stable} in standard CS, in
solving the optimization problem since in nonconvex optimization the
output of an algorithm cannot be guaranteed to be the optimal
solution.

Notations used in this paper are as follows. Bold-case letters are
reserved for vectors and matrices. $\zeron{\m{x}}$ denotes the
pseudo $\ell_0$ norm that counts the number of nonzero entries of a
vector $\m{x}$. $\onen{\m{x}}$ and $\twon{\m{x}}$ denote the
$\ell_1$ and $\ell_2$ norms of a vector $\m{x}$ respectively.
$\twon{\m{A}}$ and $\frobn{\m{A}}$ are the spectral and Frobenius norms of a matrix $\m{A}$ respectively. $\m{x}^T$ is the transpose of a vector $\m{x}$ and $\m{A}^T$ is for a matrix
$\m{A}$. $x_j$ is the $j$th entry of a vector $\m{x}$.
$T^c$ is the complementary set of a set $T$. Unless otherwise stated, $\m{x}_T$ has entries of a vector $\m{x}$ on an index set $T$ and zero entries on $T^c$. $\diag\sbra{\m{x}}$ is a diagonal matrix
with its diagonal entries being entries of a vector $\m{x}$. $\odot$
is the Hadamard (elementwise) product.

The rest of the paper is organized as follows. Section
\ref{sec:existingresults} first defines formally some terminologies used in this paper and then introduces existing results on standard CS and
perturbed CS. Section \ref{sec:SP-CS} presents the main results
of the paper as well as some discussions and a practical application in DOA estimation. Section
\ref{sec:algorithms} introduces algorithms for the $\ell_1$ minimization problem in our
considered perturbed CS and their analysis. Section \ref{sec:simulation} presents extensive numerical simulations to verify our main results and also empirical results of DOA estimation to support the theoretical findings. Conclusions are drawn in Section \ref{sec:conclusion}. Finally, some mathematical proofs are provided in Appendices.

\section{Preliminary Results}\label{sec:existingresults}

\subsection{Definitions} \label{sec:definition}
For the purpose of clarification of expression, we define formally some terminologies for signal recovery used in this paper, including stability in standard CS, robustness and robust stability in perturbed CS.

\begin{defi}[\cite{candes2006stable}] In standard CS where $\m{\Phi}$ is known {\em a priori}, consider a recovered signal $\widehat{\m{x}}$ of $\m{x}^o$ from measurements $\m{y}=\m{\Phi}\m{x}^o+\m{e}$ with $\twon{\m{e}}\leq\epsilon$. We say that $\widehat{\m{x}}$ achieves stable signal recovery if
\equ{\twon{\widehat{\m{x}}-\m{x}^o}\leq C^{stb}_1k^{-q+1/2}+C^{stb}_2\epsilon\notag}
holds for compressible signal $\m{x}^o$ obeying (\ref{formu:powerlaw}) and an integer $k$, or if
\equ{\twon{\widehat{\m{x}}-\m{x}^o}\leq C^{stb}_2\epsilon\notag}
holds for $k$-sparse signal $\m{x}^o$,
with nonnegative constants $C^{stb}_1$, $C^{stb}_2$.
\end{defi}

\begin{defi} In perturbed CS where $\m{\Phi}=\m{A}+\m{E}$ with $\m{A}$ known {\em a priori} and $\m{E}$ unknown with $\frobn{\m{E}}\leq\eta$, consider a recovered signal $\widehat{\m{x}}$ of $\m{x}^o$ from measurements $\m{y}=\m{\Phi}\m{x}^o+\m{e}$ with $\twon{\m{e}}\leq\epsilon$. We say that $\widehat{\m{x}}$ achieves robust signal recovery if
\equ{\twon{\widehat{\m{x}}-\m{x}^o}\leq C^{rbt}_1k^{-q+1/2}+C^{rbt}_2\epsilon+C^{rbt}_3\eta\notag}
holds for compressible signal $\m{x}^o$ obeying (\ref{formu:powerlaw}) and an integer $k$, or if
\equ{\twon{\widehat{\m{x}}-\m{x}^o}\leq C^{rbt}_2\epsilon+C^{rbt}_3\eta\notag}
holds for $k$-sparse signal $\m{x}^o$,
with nonnegative constants $C^{rbt}_1$, $C^{rbt}_2$ and $C^{rbt}_3$. \label{defi:robust}
\end{defi}

\begin{defi} In perturbed CS where $\m{\Phi}=\m{A}+\m{E}$ with $\m{A}$ known {\em a priori} and $\m{E}$ unknown with $\frobn{\m{E}}\leq\eta$, consider a recovered signal $\widehat{\m{x}}$ of $\m{x}^o$ from measurements $\m{y}=\m{\Phi}\m{x}^o+\m{e}$ with $\twon{\m{e}}\leq\epsilon$. We say that $\widehat{\m{x}}$ achieves robustly stable signal recovery if
\equ{\twon{\widehat{\m{x}}-\m{x}^o}\leq C^{rs}_1\sbra{\eta}k^{-q+1/2}+C^{rs}_2\sbra{\eta}\epsilon\notag}
holds for compressible signal $\m{x}^o$ obeying (\ref{formu:powerlaw}) and an integer $k$, or if
\equ{\twon{\widehat{\m{x}}-\m{x}^o}\leq C^{rs}_2\sbra{\eta}\epsilon\notag}
holds for $k$-sparse signal $\m{x}^o$,
with nonnegative constants $C^{rs}_1$, $C^{rs}_2$ depending on $\eta$. \label{defi:robuststable}
\end{defi}

\begin{rem}~
\begin{itemize}
  \item[(1)] In the case where $\m{x}^o$ is compressible, the defined stable, robust, or robustly stable signal recovery is in fact for its $k$ largest entries (in absolute value). The first term $O\sbra{k^{-q+1/2}}$ in the error bounds above represents, by (\ref{formu:2normerror_compressible}), the best approximation error (up to a scale) that can be achieved when we know everything about $\m{x}^o$ and select its $k$ largest entries.
  \item[(2)] The Frobenius norm of $\m{E}$, $\frobn{\m{E}}$, can be replaced by any other norm  in Definitions \ref{defi:robust} and \ref{defi:robuststable} since the norms are equivalent.
  \item[(3)] By robust stability, we mean that the signal recovery is stable for any fixed matrix perturbation level $\eta$ according to Definition \ref{defi:robuststable}.

\end{itemize}
\end{rem}

It should be noted that the stable recovery in standard CS and the robustly stable recovery in perturbed CS are exact in the noise free, sparse signal case while there is no such a guarantee for the robust recovery in perturbed CS.

\subsection{Stable Signal Recovery of Standard CS}\label{sec:standardCS}
The task of standard CS is to recover the original signal
$\m{x}^o$ via an efficient approach given the sensing matrix
$\m{\Phi}$, acquired sample $\m{y}$ and upper bound $\epsilon$
for the measurement noise. This paper focuses on the $\ell_1$ norm
minimization approach. The restricted isometry property (RIP)
\cite{candes2006compressive} has become a dominant tool to such
analysis, which is defined as follows.

\begin{defi} Define the $k$-restricted isometry constant (RIC) of a matrix $\m{\Phi}$, denoted by   $\delta_k\sbra{\m{\Phi}}$, as the smallest number such that
\equ{\sbra{1-\delta_k\sbra{\m{\Phi}}}\twon{\m{v}}^2\leq\twon{\m{\Phi}\m{v}}^2\leq \sbra{1+\delta_k\sbra{\m{\Phi}}}\twon{\m{v}}^2\notag}
holds for all $k$-sparse vectors $\m{v}$. $\m{\Phi}$ is said to satisfy the $k$-RIP with constant $\delta_k\sbra{\m{\Phi}}$ if $\delta_k\sbra{\m{\Phi}}<1$.
\end{defi}

Based on the RIP, the following theorem holds.

\begin{thm}[\cite{candes2008restricted}] Assume that $\delta_{2k}\sbra{\m{\Phi}}<\sqrt{2}-1$ and $\twon{\m{e}}\leq\epsilon$. Then an optimal solution $\m{x}^*$ to the basis pursuit denoising (BPDN) problem
\equ{\min_{\m{x}}\onen{\m{x}},\st\twon{\m{y}-\m{\Phi}\m{x}}\leq\epsilon \label{formu:BPDN}}
satisfies
\equ{\twon{\m{x}^*-\m{x}^o}\leq C_0^{std}k^{-1/2}\onen{\m{x}^o-\m{x}^k}+C_1^{std}\epsilon \label{formu:ReconUpperBound_BPDN}}
where $C_0^{std}=\frac{2\mbra{1+\sbra{\sqrt{2}-1}\delta_{2k}\sbra{\m{\Phi}}}} {1-\sbra{\sqrt{2}+1}\delta_{2k}\sbra{\m{\Phi}}}$, $C_1^{std}=\frac{4\sqrt{1+\delta_{2k}\sbra{\m{\Phi}}}} {1-\sbra{\sqrt{2}+1}\delta_{2k}\sbra{\m{\Phi}}}$.\label{Thm:BPDN}
\end{thm}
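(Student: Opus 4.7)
The plan is to follow a now-standard Cand{\`e}s-style RIP argument. Set $\m{h}=\m{x}^*-\m{x}^o$ and extract two ingredients from the hypotheses. Feasibility of $\m{x}^*$ (by construction) and of $\m{x}^o$ (since $\twon{\m{y}-\m{\Phi}\m{x}^o}=\twon{\m{e}}\leq\epsilon$), combined with the triangle inequality, yields the data-fidelity bound $\twon{\m{\Phi}\m{h}}\leq 2\epsilon$. The $\ell_1$ optimality $\onen{\m{x}^*}\leq\onen{\m{x}^o}$, after splitting $\m{x}^o+\m{h}$ on the support $T_0$ of the best $k$-term approximation $\m{x}^k$ and applying the triangle inequality on each piece, yields the tube constraint
\equ{\onen{\m{h}_{T_0^c}}\leq\onen{\m{h}_{T_0}}+2\onen{\m{x}^o-\m{x}^k}.\notag}

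Next I would introduce Cand{\`e}s's block decomposition of $T_0^c$: reorder its indices so that $\abs{h_i}$ is nonincreasing and cut them into consecutive size-$k$ blocks $T_1,T_2,\ldots$. The monotonicity that every entry of $\abs{\m{h}_{T_j}}$ is dominated by the average of $\abs{\m{h}_{T_{j-1}}}$ gives $\twon{\m{h}_{T_j}}\leq k^{-1/2}\onen{\m{h}_{T_{j-1}}}$ for $j\geq 2$. Telescoping and combining with the tube constraint and the Cauchy--Schwarz bound $\onen{\m{h}_{T_0}}\leq\sqrt{k}\,\twon{\m{h}_{T_0}}$ produces the tail estimate $\sum_{j\geq 2}\twon{\m{h}_{T_j}}\leq\twon{\m{h}_{T_0}}+2k^{-1/2}\onen{\m{x}^o-\m{x}^k}$. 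Since $\twon{\m{h}}\leq\twon{\m{h}_{T_0\cup T_1}}+\sum_{j\geq 2}\twon{\m{h}_{T_j}}$, everything reduces to controlling the head $\twon{\m{h}_{T_0\cup T_1}}$.

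For the head I would invoke the RIP via the identity $\twon{\m{\Phi}\m{h}_{T_0\cup T_1}}^2=\inp{\m{\Phi}\m{h}_{T_0\cup T_1},\m{\Phi}\m{h}}-\sum_{j\geq 2}\inp{\m{\Phi}\m{h}_{T_0\cup T_1},\m{\Phi}\m{h}_{T_j}}$. The left-hand side is lower-bounded by $\sbra{1-\delta_{2k}}\twon{\m{h}_{T_0\cup T_1}}^2$; the first term on the right is upper-bounded by $2\epsilon\sqrt{1+\delta_{2k}}\,\twon{\m{h}_{T_0\cup T_1}}$ using Cauchy--Schwarz, RIP, and the data-fidelity bound; and each cross term is controlled by the polarization-type consequence of RIP, $\abs{\inp{\m{\Phi}\m{u},\m{\Phi}\m{v}}}\leq\delta_{2k}\twon{\m{u}}\twon{\m{v}}$ for disjointly supported $k$-sparse $\m{u},\m{v}$, applied to the pairs $\sbra{\m{h}_{T_0},\m{h}_{T_j}}$ and $\sbra{\m{h}_{T_1},\m{h}_{T_j}}$. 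Using $\twon{\m{h}_{T_0}}+\twon{\m{h}_{T_1}}\leq\sqrt{2}\,\twon{\m{h}_{T_0\cup T_1}}$ together with the tail bound then produces an inequality of the form
\equ{\sbra{1-\delta_{2k}}\twon{\m{h}_{T_0\cup T_1}}\leq 2\epsilon\sqrt{1+\delta_{2k}}+\sqrt{2}\delta_{2k}\mbra{\twon{\m{h}_{T_0}}+2k^{-1/2}\onen{\m{x}^o-\m{x}^k}}.\notag}
Absorbing $\sqrt{2}\delta_{2k}\twon{\m{h}_{T_0}}\leq\sqrt{2}\delta_{2k}\twon{\m{h}_{T_0\cup T_1}}$ to the left-hand side is legitimate precisely when $1-\sbra{1+\sqrt{2}}\delta_{2k}>0$, i.e.\ $\delta_{2k}<\sqrt{2}-1$; solving for $\twon{\m{h}_{T_0\cup T_1}}$ and adding back the tail reproduces the stated constants $C_0^{std}$ and $C_1^{std}$.

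The main obstacle I anticipate is the polarization lemma $\abs{\inp{\m{\Phi}\m{u},\m{\Phi}\m{v}}}\leq\delta_{2k}\twon{\m{u}}\twon{\m{v}}$ for disjointly supported $k$-sparse $\m{u},\m{v}$. This is the only non-routine geometric ingredient and does not follow directly from the two-sided RIP bound; it is obtained by applying the two-sided RIP to the $2k$-sparse vectors $\m{u}/\twon{\m{u}}\pm\m{v}/\twon{\m{v}}$ and subtracting via the parallelogram identity. Everything else is elementary rearrangement of triangle inequalities and careful constant tracking, so in a full write-up I would first establish the polarization inequality as a preliminary lemma and then quote it at the relevant step.
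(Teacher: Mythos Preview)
Your proposal is correct and is precisely the Cand{\`e}s argument that the paper cites for this result; the paper does not give an independent proof of Theorem~\ref{Thm:BPDN} but reproduces the same decomposition, tail estimate, polarization lemma (stated there as Lemma~\ref{Lamma:RIP_disjointsparse} in the D-RIP setting), and constant tracking in Appendix~B when proving its generalization Theorem~\ref{Thm:BPDN_offgrid_compressible}. In particular your anticipated ``obstacle'' --- the inequality $\abs{\inp{\m{\Phi}\m{u},\m{\Phi}\m{v}}}\leq\delta_{2k}\twon{\m{u}}\twon{\m{v}}$ for disjointly supported $k$-sparse vectors --- is exactly the preliminary lemma the paper isolates and proves by the same parallelogram trick you describe.
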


Theorem \ref{Thm:BPDN} states that a $k$-sparse signal $\m{x}^o$ ($\m{x}^k=\m{x}^o$) can be stably recovered by solving a computationally efficient convex optimization problem provided
$\delta_{2k}\sbra{\m{\Phi}}<\sqrt{2}-1$. The same conclusion holds in the case of compressible signal $\m{x}^o$ since
\equ{k^{-1/2}\onen{\m{x}^o-\m{x}^k}\leq C''_qk^{-q+1/2}\label{formu:l1bound}}
according to (\ref{formu:powerlaw}) and (\ref{formu:2normerror_compressible}) with $C''_q$ being a constant.
In the special noise free, $k$-sparse signal case,
such a recovery is exact. The RIP condition in Theorem
\ref{Thm:BPDN} can be satisfied provided $m\geq
O\sbra{k\log\sbra{n/k}}$ with a large probability if the sensing
matrix $\m{\Phi}$ is i.i.d. subgaussian distributed
\cite{baraniuk2008simple}. Note that the RIP
condition for the stable signal recovery in standard CS has been
relaxed in \cite{foucart2009sparsest,cai2010shifting} but it is beyond
the scope of this paper.

\subsection{Robust Signal Recovery in Perturbed CS}\label{sec:robustrecovery}
In standard CS, the sensing matrix $\m{\Phi}$ is assumed to be
exactly known. Such an ideal assumption is not always the case in
practice. Consider that the true sensing matrix is
$\m{\Phi}=\m{A}+\m{E}$ where $\m{A}\in\bR^{m\times n}$ is the known
nominal sensing matrix and $\m{E}\in\bR^{m\times n}$ represents the
unknown matrix perturbation. Unlike the additive noise term $\m{e}$
in the observation model in (\ref{formu:observationmodel}), a
multiplicative ``noise'' $\m{E}\m{x}^o$ is introduced in perturbed CS and is more difficult to analyze since it is correlated
with the signal of interest. Denote $\twon{\m{E}}^{\sbra{k}}$ the
largest spectral norm taken over all $k$-column submatrices of
$\m{E}$, and similarly define $\twon{\m{\Phi}}^{\sbra{k}}$. The
following theorem is stated in \cite{herman2010general}.

\begin{thm}[\cite{herman2010general}] Assume that there exist constants $\varepsilon_{\m{E},\m{\Phi}}^{\sbra{k}}$, $\epsilon$ and $\epsilon_{\m{E},\m{x}^o}$ such that $\frac{\twon{\m{E}}^{\sbra{k}}} {\twon{\m{\Phi}}^{\sbra{k}}}\leq\varepsilon_{\m{E},\m{\Phi}}^{\sbra{k}}$, $\twon{\m{e}}\leq\epsilon$ and $\twon{\m{E}\m{x}^o}\leq\epsilon_{\m{E},\m{x}^o}$. Assume that $\delta_{2k}\sbra{\m{\Phi}}<\frac{\sqrt{2}}{\sbra{1+\varepsilon_{\m{E},\m{\Phi}}^ {\sbra{2k}}}^2}-1$ and $\zeron{\m{x}^o}\leq k$. Then an optimal solution $\m{x}^*$ to the BPDN problem with the nominal sensing matrix $\m{A}$, denoted by N-BPDN,
\equ{\min_{\m{x}}\onen{\m{x}},\st \twon{\m{y}-\m{A}\m{x}}\leq\epsilon+\epsilon_{\m{E},\m{x}^o} \label{formu:N-BPDN}}
achieves robust signal recovery with
\equ{\twon{\m{x}^*-\m{x}^o}\leq C^{ptb}\epsilon + C^{ptb}\epsilon_{\m{E},\m{x}^o}}
where $C^{ptb}=\frac{4\sqrt{1+\delta_{2k}\sbra{\m{\Phi}}}\sbra{1+\varepsilon_{\m{E},\m{\Phi}}^ {\sbra{2k}}} }{1-\sbra{\sqrt{2}+1}\mbra{\sbra{1+\delta_{2k}\sbra{\m{\Phi}}}\sbra{1+\varepsilon_{\m{E},\m{\Phi}}^ {\sbra{2k}}}^2-1}}$.\label{Thm:robust_perturbedCS}
\end{thm}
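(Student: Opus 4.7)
The strategy is a two-step reduction to the standard BPDN guarantee of Theorem \ref{Thm:BPDN}. First I would absorb the multiplicative perturbation $\m{E}\m{x}^o$ into an effective additive noise: writing $\m{y}=\m{A}\m{x}^o+\sbra{\m{E}\m{x}^o+\m{e}}$ with $\twon{\m{E}\m{x}^o+\m{e}}\leq \epsilon+\epsilon_{\m{E},\m{x}^o}$ shows that the N-BPDN problem (\ref{formu:N-BPDN}) is literally a standard BPDN problem for the nominal matrix $\m{A}$ with noise budget $\epsilon+\epsilon_{\m{E},\m{x}^o}$. If I can establish that $\m{A}$ itself satisfies $\delta_{2k}\sbra{\m{A}}<\sqrt{2}-1$, then Theorem \ref{Thm:BPDN} applied to $\m{A}$ immediately yields an error bound of the form $\twon{\m{x}^*-\m{x}^o}\leq C\sbra{\epsilon+\epsilon_{\m{E},\m{x}^o}}$, with the $\onen{\m{x}^o-\m{x}^k}$ term dropping out because $\m{x}^o$ is $k$-sparse by hypothesis.

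The second step is to transfer the RIP from $\m{\Phi}$ to $\m{A}$. For any $2k$-sparse $\m{v}$ I would write $\m{A}\m{v}=\m{\Phi}\m{v}-\m{E}\m{v}$ and combine the triangle inequality with the bounds $\twon{\m{E}\m{v}}\leq \twon{\m{E}}^{\sbra{2k}}\twon{\m{v}}$, $\twon{\m{\Phi}\m{v}}\leq \twon{\m{\Phi}}^{\sbra{2k}}\twon{\m{v}}\leq \sqrt{1+\delta_{2k}\sbra{\m{\Phi}}}\twon{\m{v}}$, and the standing assumption $\twon{\m{E}}^{\sbra{2k}}\leq \varepsilon_{\m{E},\m{\Phi}}^{\sbra{2k}}\twon{\m{\Phi}}^{\sbra{2k}}$. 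This produces the upper estimate $\twon{\m{A}\m{v}}\leq \sbra{1+\varepsilon_{\m{E},\m{\Phi}}^{\sbra{2k}}}\sqrt{1+\delta_{2k}\sbra{\m{\Phi}}}\twon{\m{v}}$ and, through a matching lower-bound argument, the RIC estimate $\delta_{2k}\sbra{\m{A}}\leq \sbra{1+\varepsilon_{\m{E},\m{\Phi}}^{\sbra{2k}}}^2\sbra{1+\delta_{2k}\sbra{\m{\Phi}}}-1$. Under the hypothesis $\delta_{2k}\sbra{\m{\Phi}}<\sqrt{2}/\sbra{1+\varepsilon_{\m{E},\m{\Phi}}^{\sbra{2k}}}^2-1$ this gives $\delta_{2k}\sbra{\m{A}}<\sqrt{2}-1$ as desired.

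Putting the two steps together, Theorem \ref{Thm:BPDN} delivers $\twon{\m{x}^*-\m{x}^o}\leq C_1^{std}\sbra{\m{A}}\sbra{\epsilon+\epsilon_{\m{E},\m{x}^o}}$, and substituting the upper bound for $\delta_{2k}\sbra{\m{A}}$ into the expression for $C_1^{std}$ yields exactly the constant $C^{ptb}$ in the statement. The main obstacle, and really the only non-routine step, is the RIP \emph{lower} bound for $\m{A}$: from the triangle inequality one obtains $\twon{\m{A}\m{v}}\geq \sqrt{1-\delta_{2k}\sbra{\m{\Phi}}}\twon{\m{v}}-\varepsilon_{\m{E},\m{\Phi}}^{\sbra{2k}}\sqrt{1+\delta_{2k}\sbra{\m{\Phi}}}\twon{\m{v}}$, and one must verify that the resulting lower bound on $\twon{\m{A}\m{v}}^2$ is consistent with $1-\delta_{2k}\sbra{\m{A}}\geq 2-\sbra{1+\varepsilon_{\m{E},\m{\Phi}}^{\sbra{2k}}}^2\sbra{1+\delta_{2k}\sbra{\m{\Phi}}}$. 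Writing $a=1+\delta_{2k}\sbra{\m{\Phi}}$ and $b=1-\delta_{2k}\sbra{\m{\Phi}}$ and expanding, this reduces to the elementary inequality $\sqrt{b/a}\leq 1+\varepsilon_{\m{E},\m{\Phi}}^{\sbra{2k}}$, which holds trivially because $b\leq a$.
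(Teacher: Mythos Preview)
The paper does not actually prove this theorem; it is quoted as a preliminary result from Herman and Strohmer \cite{herman2010general} in Subsection \ref{sec:robustrecovery} and only commented on in Remark \ref{remark:robustrecovery}. So there is no in-paper proof to compare against.

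That said, your argument is correct and is essentially the proof in the original reference: absorb $\m{E}\m{x}^o$ into the additive noise so that N-BPDN becomes a standard BPDN instance for $\m{A}$ with budget $\epsilon+\epsilon_{\m{E},\m{x}^o}$, transfer the RIP from $\m{\Phi}$ to $\m{A}=\m{\Phi}-\m{E}$ via the spectral-norm perturbation bound $\twon{\m{E}}^{\sbra{2k}}\leq \varepsilon_{\m{E},\m{\Phi}}^{\sbra{2k}}\sqrt{1+\delta_{2k}\sbra{\m{\Phi}}}$, and then invoke Theorem \ref{Thm:BPDN}. Your verification of the lower RIP bound, reducing it to $\sqrt{b/a}\leq 1+\varepsilon_{\m{E},\m{\Phi}}^{\sbra{2k}}$, is exactly what is needed; note also that the hypothesis $\sbra{1+\varepsilon_{\m{E},\m{\Phi}}^{\sbra{2k}}}^2\sbra{1+\delta_{2k}\sbra{\m{\Phi}}}<\sqrt{2}$ forces $\varepsilon_{\m{E},\m{\Phi}}^{\sbra{2k}}<\sqrt{b/a}$, so the triangle-inequality lower bound is indeed positive before squaring. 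Finally, your substitution of the upper bound for $\delta_{2k}\sbra{\m{A}}$ into $C_1^{std}$ legitimately reproduces $C^{ptb}$ because $C_1^{std}$ is monotone increasing in the RIC on the admissible range.
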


\begin{rem}~
\begin{itemize}
 \item[(1)] The relaxation of the inequality constraint in (\ref{formu:N-BPDN}) from $\epsilon$ to $\epsilon+\epsilon_{\m{E},\m{x}^o}$ is to ensure that the original signal $\m{x}^o$ is a feasible solution to N-BPDN. Theorem \ref{Thm:robust_perturbedCS} is a little different from that in \cite{herman2010general}, where the multiplicative ``noise'' $\m{E}\m{x}^o$ is bounded using $\varepsilon_{\m{E},\m{\Phi}}^{\sbra{k}}$, $\delta_{k}\sbra{\m{\Phi}}$ and $\twon{\m{\Phi}\m{x}^o}$ rather than a constant $\epsilon_{\m{E},\m{x}^o}$.
 \item[(2)] Theorem \ref{Thm:robust_perturbedCS} is applicable only to the small perturbation case where $\varepsilon_{\m{E},\m{\Phi}}^{\sbra{2k}}<\sqrt[4]{2}-1$ since $\delta_{2k}\sbra{\m{\Phi}}\geq0$.
 \item[(3)] Theorem \ref{Thm:robust_perturbedCS} generalizes Theorem \ref{Thm:BPDN} for the $k$-sparse signal case. As the perturbation $\m{E}\rightarrow0$, Theorem \ref{Thm:robust_perturbedCS} coincides with Theorem \ref{Thm:BPDN} for the $k$-sparse signal case.
\end{itemize}\label{remark:robustrecovery}
\end{rem}

Theorem \ref{Thm:robust_perturbedCS} states that, for a small matrix perturbation $\m{E}$, the signal recovery of N-BPDN that is based on the nominal sensing matrix $\m{A}$ is robust to the perturbation with the recovery error growing at most linearly with the perturbation level. Note that, in general, the signal recovery in Theorem \ref{Thm:robust_perturbedCS} is unstable according to the definition of stability in this paper since the recovery error cannot be bounded within a constant (independent of the noise) times the noise level as some perturbation occurs. A result on general signals in \cite{herman2010general} is omitted that shows the robust recovery of a compressible signal. The same problem is studied and similar results are reported in \cite{herman2010mixed} based on the greedy algorithm CoSaMP \cite{needell2009cosamp}.

\section{SP-CS: CS Subject to Structured Perturbation} \label{sec:SP-CS}
\subsection{Problem Description}
In this paper we consider a structured perturbation in the form
$\m{E}=\m{B}\m{\Delta}^o$ where $\m{B}\in\bR^{m\times n}$ is known {\em
a priori}, $\m{\Delta}^o=\diag\sbra{\m{\beta}^o}$ is a bounded uncertain
term with $\m{\beta}^o\in\mbra{-r,r}^n$ and $r>0$, i.e., each column of
the perturbation is on a known direction. In addition, we assume
that each column of $\m{B}$ has unit norm to avoid the scaling
problem between $\m{B}$ and $\m{\Delta}^o$ (in fact, the D-RIP
condition on matrix $\mbra{\m{A},\m{B}}$ in Subsection
\ref{sec:mainresults} implies that columns of both $\m{A}$ and
$\m{B}$ have approximately unit norms). As a result, the observation
model in (\ref{formu:observationmodel}) becomes
\equ{\m{y}=\m{\Phi}\m{x}^o+\m{e},\quad\m{\Phi}=\m{A}+\m{B}\m{\Delta}^o
\label{formu:observationmodel_perturbed}} with
$\m{\Delta}^o=\diag\sbra{\m{\beta}^o}$, $\m{\beta}^o\in\mbra{-r,r}^n$ and
$\twon{\m{e}}\leq\epsilon$. Given $\m{y}$, $\m{A}$, $\m{B}$, $r$ and
$\epsilon$, the task of SP-CS is to recover $\m{x}^o$ and possibly $\m{\beta}^o$ as well.

\begin{rem}~
 \begin{itemize}
  \item[(1)] Without loss of generality, we assume that $\m{x}$, $\m{y}$, $\m{A}$, $\m{B}$ and $\m{e}$ are all in the real domain unless otherwise stated.
  \item[(2)] If $x^o_j=0$ for some $j\in\lbra{1,\cdots,n}$, then $\beta^o_j$ has no contributions to the observation $\m{y}$ and hence it is impossible to recover $\beta^o_j$. As a result, the recovery of $\m{\beta}^o$ in this paper refers only to the recovery on the support of $\m{x}^o$.
 \end{itemize}
\end{rem}

\subsection{Main Results of This Paper}\label{sec:mainresults}
In this paper, a vector $\m{v}$ is called $2k$-duplicately (D-)
sparse if $\m{v}=\mbra{\m{v}_1^T,\m{v}_2^T}^T$ with $\m{v}_1$ and
$\m{v}_2$ being of the same dimension and jointly $k$-sparse (each
being $k$-sparse and sharing the same support). The concept of
duplicate (D-) RIP is defined as follows.

\begin{defi} Define the $2k$-duplicate (D-) RIC of a matrix $\m{\Phi}$, denoted by $\bar{\delta}_{2k}\sbra{\m{\Phi}}$, as the smallest number such that
\equ{\sbra{1-\bar{\delta}_{2k}\sbra{\m{\Phi}}}\twon{\m{v}}^2\leq\twon{\m{\Phi}\m{v}}^2\leq \sbra{1+\bar{\delta}_{2k}\sbra{\m{\Phi}}}\twon{\m{v}}^2 \notag}
holds for all $2k$-D-sparse vectors $\m{v}$. $\m{\Phi}$ is said to satisfy the $2k$-D-RIP with constant $\bar{\delta}_{2k}\sbra{\m{\Phi}}$ if $\bar{\delta}_{2k}\sbra{\m{\Phi}}<1$.
\end{defi}

With respect to the perturbed observation model in (\ref{formu:observationmodel_perturbed}), let $\m{\Psi}=\mbra{\m{A},\m{B}}$. The main results of this paper are stated in the following theorems. The proof of Theorem \ref{Thm:l0_offgrid} is provided in Appendix A and proofs of Theorems \ref{Thm:BPDN_offgrid} and \ref{Thm:BPDN_offgrid_compressible} are in Appendix B.

\begin{thm} In the noise free case where $\m{e}=\m{0}$, assume that $\zeron{\m{x}^o}\leq k$ and $\bar{\delta}_{4k}\sbra{\m{\Psi}}<1$. Then an optimal solution $\sbra{\m{x}^*,\m{\beta}^*}$ to the perturbed combinatorial optimization problem
\equ{\min_{\m{x}\in\bR^n,\m{\beta}\in\mbra{-r,r}^n}\zeron{\m{x}}, \st \m{y}=\sbra{\m{A}+\m{B}\m{\Delta}}\m{x} \label{formu:l0_offgrid_Phiv}}
with $\m{\Delta}=\diag\sbra{\m{\beta}}$ recovers $\m{x}^o$ and $\m{\beta}^o$. \label{Thm:l0_offgrid}
\end{thm}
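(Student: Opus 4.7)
The plan is to set up a contradiction-style argument that exploits the joint sparsity structure of the difference between the true and recovered parameters, and then conclude using the $4k$-D-RIP hypothesis.

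First I would observe that the pair $(\m{x}^o,\m{\beta}^o)$ is itself feasible for (\ref{formu:l0_offgrid_Phiv}), so optimality of $(\m{x}^*,\m{\beta}^*)$ gives $\zeron{\m{x}^*}\leq \zeron{\m{x}^o}\leq k$. Writing the two feasibility equations $\m{y}=(\m{A}+\m{B}\m{\Delta}^o)\m{x}^o$ and $\m{y}=(\m{A}+\m{B}\m{\Delta}^*)\m{x}^*$ and subtracting, I would obtain
\begin{equation}
\m{A}\m{u}+\m{B}\m{v}=\m{0},\quad \m{u}:=\m{x}^o-\m{x}^*,\quad \m{v}:=\m{\beta}^o\odot\m{x}^o-\m{\beta}^*\odot\m{x}^*.\notag
\end{equation}
In terms of $\m{\Psi}=[\m{A},\m{B}]$ this reads $\m{\Psi}\mbra{\m{u}^T,\m{v}^T}^T=\m{0}$.

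The key observation, and what makes the $4k$-D-RIP the natural hypothesis, is that both $\m{u}$ and $\m{v}$ are supported on $\supp(\m{x}^o)\cup\supp(\m{x}^*)$, an index set of cardinality at most $2k$. Hence $\m{u}$ and $\m{v}$ are \emph{jointly} $2k$-sparse (they share the same support set), so the stacked vector $\mbra{\m{u}^T,\m{v}^T}^T$ is $4k$-D-sparse in the sense of the definition above. Applying the lower $4k$-D-RIP inequality to this vector yields
\begin{equation}
\sbra{1-\bar{\delta}_{4k}(\m{\Psi})}\sbra{\twon{\m{u}}^2+\twon{\m{v}}^2}\leq \twon{\m{\Psi}\mbra{\m{u}^T,\m{v}^T}^T}^2=0,\notag
\end{equation}
and since $\bar{\delta}_{4k}(\m{\Psi})<1$ by hypothesis, it forces $\m{u}=\m{0}$ and $\m{v}=\m{0}$. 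The first conclusion is $\m{x}^*=\m{x}^o$; the second, $\m{\beta}^o\odot\m{x}^o=\m{\beta}^*\odot\m{x}^*$, then gives $\beta^*_j=\beta^o_j$ for every $j\in\supp(\m{x}^o)$, which is exactly the recovery of $\m{\beta}^o$ in the sense of the preceding remark.

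The only nontrivial point in the argument is recognizing the shared-support structure of $(\m{u},\m{v})$, which is what justifies using the D-RIP with order $4k$ rather than an ordinary $4k$-RIP on $\m{\Psi}$ (the latter would be vacuous since $\m{\Psi}$ has $2n$ columns and we would need $\bar{\delta}_{4k}$ to control unrelated pairs). I do not expect any significant obstacle; the remainder consists of the routine bookkeeping above and does not require bounding cross-terms or handling noise, since the problem is in the noise-free, exactly sparse regime.
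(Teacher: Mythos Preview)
Your proposal is correct and follows essentially the same approach as the paper's proof: both observe that optimality gives $\zeron{\m{x}^*}\leq k$, rewrite the two feasibility equations as $\m{\Psi}\m{z}^o=\m{\Psi}\m{z}^*$ for the stacked vectors $\m{z}=[\m{x}^T,(\m{\beta}\odot\m{x})^T]^T$, note that $\m{z}^o-\m{z}^*$ is $4k$-D-sparse because both halves are supported on $\supp(\m{x}^o)\cup\supp(\m{x}^*)$, and then invoke $\bar{\delta}_{4k}(\m{\Psi})<1$ to force $\m{z}^o=\m{z}^*$. Your notation $(\m{u},\m{v})$ is just the paper's $\m{z}^o-\m{z}^*$ split into its two blocks.
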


\begin{thm} Assume that $\bar{\delta}_{4k}\sbra{\m{\Psi}}<\sbra{\sqrt{2\sbra{1+r^2}}+1}^{-1}$, $\zeron{\m{x}^o}\leq k$ and $\twon{\m{e}}\leq\epsilon$. Then an optimal solution $\sbra{\m{x}^*,\m{\beta}^*}$ to the perturbed (P-) BPDN problem
\equ{\min_{\m{x}\in{\bR}^n,\m{\beta}\in\mbra{-r,r}^n}\onen{\m{x}}, \st \twon{\m{y}-\sbra{\m{A}+\m{B}\m{\Delta}}\m{x}}\leq\epsilon \label{formu:l1_offgrid_Phiv}}
achieves robustly stable signal recovery with
{\lentwo\equa{
&&\twon{\m{x}^*-\m{x}^o}\leq C\epsilon, \label{formu:boundx}\\
&&\twon{\sbra{\m{\beta}^*-\m{\beta}^o} \odot \m{x}^o}\leq \cC\epsilon \label{formu:boundbetax}}
}where
{\lentwo\equa{C
&=& \frac{4\sqrt{1+\bar\delta_{4k}\sbra{\m{\Psi}}}} {1-\sbra{\sqrt{2\sbra{1+r^2}}+1}\bar{\delta}_{4k}\sbra{\m{\Psi}}},\notag\\\cC
&=& \frac{\mbra{2+\sqrt{1+r^2}\twon{\m{\Psi}}C}} {\sqrt{1-\bar\delta_{4k}\sbra{\m{\Psi}}}}.\notag }}
\label{Thm:BPDN_offgrid}
\end{thm}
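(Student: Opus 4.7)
My approach is to recast the bilinear measurement $(\m{A}+\m{B}\m{\Delta})\m{x}$ as the linear action $\m{\Psi}\m{u}$ of the concatenated dictionary on $\m{u}=[\m{x}^T,(\m{\beta}\odot\m{x})^T]^T$. Then $\m{u}^o$ built from $(\m{x}^o,\m{\beta}^o)$ is $2k$-D-sparse since $\supp(\m{\beta}^o\odot\m{x}^o)\subseteq\supp(\m{x}^o)$, and the $4k$-D-RIP of $\m{\Psi}$ is the natural surrogate for the $2k$-RIP used in Theorem~\ref{Thm:BPDN}. The structural constraint $\m{\beta}\in[-r,r]^n$ couples the two halves of the error vector in a controllable way, producing the factor $\sqrt{1+r^2}$ that eventually appears in the RIP hypothesis.

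First I would set $\m{h}^a=\m{x}^*-\m{x}^o$, $\m{h}^b=\m{\beta}^*\odot\m{x}^*-\m{\beta}^o\odot\m{x}^o$, and $\m{w}=[\m{h}^{aT},\m{h}^{bT}]^T$. Feasibility of both the truth and the P-BPDN optimizer yields the tube inequality $\twon{\m{\Psi}\m{w}}\le 2\epsilon$, while $\ell_1$-optimality of $\m{x}^*$ yields the usual cone bound $\onen{\m{h}^a_{T_0^c}}\le\onen{\m{h}^a_{T_0}}$ on $T_0=\supp(\m{x}^o)$. Partitioning $T_0^c$ into blocks $T_1,T_2,\ldots$ of size $k$ ranked by $|\m{h}^a|$ and letting $T_{01}=T_0\cup T_1$, the vector $\m{w}_{T_{01}}$ is $4k$-D-sparse and each $\m{w}_{T_j}$ ($j\ge 2$) is $2k$-D-sparse. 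The key coupling observation is that on $T_0^c$ we have $h^b_j=\beta^*_j h^a_j$, hence $\twon{\m{w}_{T_j}}\le\sqrt{1+r^2}\twon{\m{h}^a_{T_j}}$ for every $j\ge 1$; combined with the standard shifting estimate $\sum_{j\ge 2}\twon{\m{h}^a_{T_j}}\le k^{-1/2}\onen{\m{h}^a_{T_0^c}}\le\twon{\m{h}^a_{T_0}}\le\twon{\m{w}_{T_{01}}}$, this yields $\sum_{j\ge 2}\twon{\m{w}_{T_j}}\le\sqrt{1+r^2}\,\twon{\m{w}_{T_{01}}}$.

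Next, substituting $\m{\Psi}\m{w}_{T_{01}}=\m{\Psi}\m{w}-\sum_{j\ge 2}\m{\Psi}\m{w}_{T_j}$ into the D-RIP lower bound on $\twon{\m{\Psi}\m{w}_{T_{01}}}^2$ and expanding the inner product, Cauchy--Schwarz controls the direct term by $2\epsilon\sqrt{1+\bar\delta_{4k}(\m{\Psi})}\twon{\m{w}_{T_{01}}}$, while the cross terms are handled by the parallelogram estimate for D-sparse vectors with disjoint D-supports. To keep the D-RIP order at $4k$ rather than $6k$, I would split $\m{w}_{T_{01}}=\m{w}_{T_0}+\m{w}_{T_1}$ into its two $2k$-D-sparse pieces and apply the estimate to each summand separately, picking up an extra $\sqrt{2}$ factor and producing the aggregated bound $\sqrt{2}\,\bar\delta_{4k}(\m{\Psi})\twon{\m{w}_{T_{01}}}\sum_{j\ge 2}\twon{\m{w}_{T_j}}$. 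Combining with the coupling bound yields a self-bounding inequality for $\twon{\m{w}_{T_{01}}}$ whose solution is of order $\epsilon$ exactly under $\bar\delta_{4k}(\m{\Psi})<(\sqrt{2(1+r^2)}+1)^{-1}$, and (\ref{formu:boundx}) follows from $\twon{\m{h}^a}\le 2\twon{\m{h}^a_{T_{01}}}\le 2\twon{\m{w}_{T_{01}}}$.

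For (\ref{formu:boundbetax}), let $\m{q}=(\m{\beta}^*-\m{\beta}^o)\odot\m{x}^o$. The algebraic identity $\m{h}^b=\m{\beta}^*\odot\m{h}^a+\m{q}$ lets me rewrite $\m{\Psi}\m{w}=\m{\Psi}[\m{h}^{aT},(\m{\beta}^*\odot\m{h}^a)^T]^T+\m{B}\m{q}$, so bounding the first term by $\sqrt{1+r^2}\twon{\m{\Psi}}\twon{\m{h}^a}$ via the operator norm and using $\twon{\m{h}^a}\le C\epsilon$ from the previous step gives $\twon{\m{B}\m{q}}\le(2+\sqrt{1+r^2}\,\twon{\m{\Psi}}\,C)\epsilon$. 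Since $\m{q}$ is supported on $T_0$, the stacked vector $[\m{0}^T,\m{q}^T]^T$ is $2k$-D-sparse with $\m{\Psi}[\m{0}^T,\m{q}^T]^T=\m{B}\m{q}$, and the D-RIP lower bound inverts to $\twon{\m{q}}\le\twon{\m{B}\m{q}}/\sqrt{1-\bar\delta_{4k}(\m{\Psi})}$, closing the bound with the constant $\cC$. The main technical obstacle will be the cross-term bookkeeping: keeping the D-RIP order at $4k$ via the two-step parallelogram trick, and tracking the $\sqrt{1+r^2}$ coupling factor through both the $\m{h}^b$-to-$\m{h}^a$ reduction and the shifting sum so that the sharp threshold $(\sqrt{2(1+r^2)}+1)^{-1}$ emerges with the stated constants.
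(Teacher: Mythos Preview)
Your proposal is correct and follows essentially the same approach as the paper: the same linearization $\m{\Psi}\m{u}$ with $\m{u}=[\m{x}^T,(\m{\beta}\odot\m{x})^T]^T$, the same Cand\`es-style block decomposition $T_0,T_1,\ldots$ applied to $\m{h}^a$, the same splitting $\m{w}_{T_{01}}=\m{w}_{T_0}+\m{w}_{T_1}$ to keep the D-RIP order at $4k$ (the paper's Lemma~\ref{Lamma:RIP_disjointsparse}), the same coupling $\twon{\m{w}_{T_j}}\le\sqrt{1+r^2}\twon{\m{h}^a_{T_j}}$ on $T_0^c$ (which is the $k$-sparse specialization of the paper's inequality~(\ref{formu:formu3})), and the same derivation of~(\ref{formu:boundbetax}) via $\m{B}\m{q}=\m{\Psi}[\m{0}^T,\m{q}^T]^T$ and the D-RIP lower bound. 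The only cosmetic difference is that the paper self-bounds $\twon{\m{h}_{T_{01}}}$ through the chain $\twon{\m{h}_{T_{01}}}\le\twon{\m{f}_{T_{01}}}\le c_1\epsilon+c_0\twon{\m{h}_{T_{01}}}$, whereas you self-bound $\twon{\m{w}_{T_{01}}}$ directly; both yield the identical constant~$C$.
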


\begin{thm} Assume that $\bar{\delta}_{4k}\sbra{\m{\Psi}}<\sbra{\sqrt{2\sbra{1+r^2}}+1}^{-1}$ and $\twon{\m{e}}\leq\epsilon$. Then an optimal solution $\sbra{\m{x}^*,\m{\beta}^*}$ to the P-BPDN problem in (\ref{formu:l1_offgrid_Phiv}) satisfies that
{\lentwo\equa{
&&\twon{\m{x}^*-\m{x}^o}\leq \sbra{C_0k^{-1/2} + C_1}\onen{\m{x}^o-\m{x}^k}+C_2\epsilon, \label{formu:boundx}\\
&&\twon{\sbra{\m{\beta}^*-\m{\beta}^o} \odot \m{x}^k}\notag\\
&&\qquad\qquad\text{ }\leq\sbra{\cC_0k^{-1/2} + \cC_1}\onen{\m{x}^o-\m{x}^k}+\cC_2\epsilon \label{formu:boundbetax}}
}where {\lentwo\equa{C_0
&=& 2\mbra{1+\sbra{\sqrt{2\sbra{1+r^2}}-1}\bar{\delta}_{4k}\sbra{\m{\Psi}}}/a,\notag\\ C_1
&=& 2\sqrt{2}r\bar\delta_{4k}\sbra{\m{\Psi}}/a,\notag\\ \cC_0
&=& \sqrt{1+r^2}\twon{\m{\Psi}}C_0/b,\notag\\ \cC_1
&=& \mbra{\sqrt{1+r^2}C_1+2r}\twon{\m{\Psi}}/b\notag }
}with $a=1-\sbra{\sqrt{2\sbra{1+r^2}}+1}\bar{\delta}_{4k}\sbra{\m{\Psi}}$, $b=\sqrt{1-\bar\delta_{4k}\sbra{\m{\Psi}}}$ and $C_2=C$, $\cC_2=\cC$ with $C,\cC$ as defined in Theorem \ref{Thm:BPDN_offgrid}. \label{Thm:BPDN_offgrid_compressible}
\end{thm}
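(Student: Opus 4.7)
My plan is to extend the sparse-signal argument of Theorem \ref{Thm:BPDN_offgrid} to compressible signals, adapting the Cand\`{e}s-style analysis behind Theorem \ref{Thm:BPDN}. Set $\m{h}=\m{x}^*-\m{x}^o$, $\m{h}'=\m{x}^*\odot\m{\beta}^*-\m{x}^o\odot\m{\beta}^o$, and $T_0=\supp(\m{x}^k)$. Since $\sbra{\m{A}+\m{B}\diag\sbra{\m{\beta}}}\m{x}=\m{\Psi}[\m{x};\m{x}\odot\m{\beta}]$, subtracting the feasibility tubes for $(\m{x}^o,\m{\beta}^o)$ and $(\m{x}^*,\m{\beta}^*)$ yields $\twon{\m{\Psi}[\m{h};\m{h}']}\leq 2\epsilon$. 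The $\ell_1$-optimality $\onen{\m{x}^*}\leq\onen{\m{x}^o}$ yields the cone condition $\onen{\m{h}_{T_0^c}}\leq\onen{\m{h}_{T_0}}+2\onen{\m{x}^o-\m{x}^k}$ by the standard argument. I then partition $T_0^c$ into blocks $T_1,T_2,\ldots$ of size $k$ in decreasing order of $|h_j|$, so that $\sum_{j\geq 2}\twon{\m{h}_{T_j}}\leq k^{-1/2}\onen{\m{h}_{T_0^c}}$.

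For the first bound \eqref{formu:boundx}, I control $\twon{[\m{h}_{T_0\cup T_1};\m{h}'_{T_0\cup T_1}]}$, a $4k$-D-sparse vector. From the identity $\m{h}'=\m{h}\odot\m{\beta}^*+\m{x}^o\odot(\m{\beta}^*-\m{\beta}^o)$ each tail block splits as $[\m{h}_{T_j};\m{h}_{T_j}\odot\m{\beta}^*_{T_j}]+[\m{0};\m{x}^o_{T_j}\odot(\m{\beta}^*-\m{\beta}^o)_{T_j}]$, giving $\twon{[\m{h}_{T_j};\m{h}'_{T_j}]}\leq\sqrt{1+r^2}\twon{\m{h}_{T_j}}+2r\twon{\m{x}^o_{T_j}}$. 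Expanding $\twon{\m{\Psi}[\m{h}_{T_0\cup T_1};\m{h}'_{T_0\cup T_1}]}^2$ through the partition $[\m{h};\m{h}']=[\m{h}_{T_0\cup T_1};\m{h}'_{T_0\cup T_1}]+\sum_{j\geq 2}[\m{h}_{T_j};\m{h}'_{T_j}]$, applying the $\bar{\delta}_{4k}\sbra{\m{\Psi}}$-D-RIP lower bound on the LHS, the tube bound on the first inner product, and the standard polarisation inequality $\abs{\inp{\m{\Psi}\m{u},\m{\Psi}\m{v}}}\leq\bar{\delta}_{4k}\sbra{\m{\Psi}}\twon{\m{u}}\twon{\m{v}}$ on the remaining disjointly supported $2k$-D-sparse terms, and then invoking $\sum_{j\geq 2}\twon{\m{x}^o_{T_j}}\leq\onen{\m{x}^o-\m{x}^k}$ together with the cone condition, produces
\[ a\twon{[\m{h}_{T_0\cup T_1};\m{h}'_{T_0\cup T_1}]}\leq 2\epsilon\sqrt{1+\bar{\delta}_{4k}\sbra{\m{\Psi}}}+2\sqrt{2}\bar{\delta}_{4k}\sbra{\m{\Psi}}\mbra{\sqrt{1+r^2}k^{-1/2}+r}\onen{\m{x}^o-\m{x}^k}. \]
Combining this with $\twon{\m{h}}\leq 2\twon{[\m{h}_{T_0\cup T_1};\m{h}'_{T_0\cup T_1}]}+2k^{-1/2}\onen{\m{x}^o-\m{x}^k}$ and regrouping terms yields \eqref{formu:boundx}.

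For the second bound \eqref{formu:boundbetax}, the padded vector $[\m{0};\m{x}^k\odot(\m{\beta}^*-\m{\beta}^o)]\in\bR^{2n}$ has nonzero entries only on $T_0$ and is therefore $2k$-D-sparse, so the D-RIP gives $b\,\twon{\m{x}^k\odot(\m{\beta}^*-\m{\beta}^o)}\leq\twon{\m{\Psi}[\m{0};\m{x}^k\odot(\m{\beta}^*-\m{\beta}^o)]}$. From $\m{x}^k\odot(\m{\beta}^*-\m{\beta}^o)=\m{h}'-\m{h}\odot\m{\beta}^*-(\m{x}^o-\m{x}^k)\odot(\m{\beta}^*-\m{\beta}^o)$ the right-hand side rewrites as $\m{\Psi}[\m{h};\m{h}']-\m{\Psi}[\m{h};\m{h}\odot\m{\beta}^*+(\m{x}^o-\m{x}^k)\odot(\m{\beta}^*-\m{\beta}^o)]$; the first piece has norm $\leq 2\epsilon$, while the second, via the split $[\m{h};\m{h}\odot\m{\beta}^*]+[\m{0};(\m{x}^o-\m{x}^k)\odot(\m{\beta}^*-\m{\beta}^o)]$ and the elementary bound $\twon{\m{x}^o-\m{x}^k}\leq\onen{\m{x}^o-\m{x}^k}$, has norm $\leq\twon{\m{\Psi}}\mbra{\sqrt{1+r^2}\twon{\m{h}}+2r\onen{\m{x}^o-\m{x}^k}}$. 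Substituting the bound for $\twon{\m{h}}$ from \eqref{formu:boundx} into the D-RIP inequality then gives \eqref{formu:boundbetax} with the stated $\cC_0,\cC_1,\cC_2$. The main hurdle is careful bookkeeping through the nested substitutions; in particular the new $r$-proportional $C_1,\cC_1$ contributions (absent in both standard compressible CS and the sparse perturbed case of Theorem \ref{Thm:BPDN_offgrid}) are born from the tail sum $\sum_{j\geq 2}\twon{\m{x}^o_{T_j}}\leq\onen{\m{x}^o-\m{x}^k}$, which cannot be made to carry a $k^{-1/2}$ factor.
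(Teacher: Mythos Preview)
Your proposal is correct and follows essentially the same route as the paper's Appendix~B: your $[\m{h};\m{h}']$ is exactly the paper's $\m{f}=\m{z}^*-\m{z}^o$, and the Cand\`es-style partition $T_0,T_1,\ldots$, the polarisation inequality (the paper's Lemma~\ref{Lamma:RIP_disjointsparse}), the tail splitting $\twon{\m{f}_{T_j}}\leq\sqrt{1+r^2}\twon{\m{h}_{T_j}}+2r\twon{\m{x}^o_{T_j}}$, and the final D-RIP argument for $(\m{\beta}^*-\m{\beta}^o)\odot\m{x}^k$ all match the paper step for step. The only cosmetic difference is that you close the self-referential inequality in $\twon{[\m{h}_{T_{01}};\m{h}'_{T_{01}}]}$ directly, whereas the paper first passes to $\twon{\m{h}_{T_{01}}}\leq\twon{\m{f}_{T_{01}}}$ before rearranging; the resulting constants are the same.
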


\begin{rem} In general, the robustly stable signal recovery cannot be concluded for compressible signals since the error bound in (\ref{formu:boundx}) may be very large in the case of large perturbation by $C_1=O\sbra{r}$.
If the perturbation is small with $r=O\sbra{k^{-1/2}}$, then the robust stability can be achieved  for compressible signals by (\ref{formu:l1bound}) provided that the D-RIP condition in Theorem \ref{Thm:BPDN_offgrid_compressible} is satisfied. \label{rem:robuststable}
\end{rem}


\subsection{Interpretation of the Main Results}\label{sec:discussion}
Theorem \ref{Thm:l0_offgrid} states that for a $k$-sparse signal
$\m{x}^o$, it can be recovered by solving a combinatorial
optimization problem provided $\bar{\delta}_{4k}\sbra{\m{\Psi}}<1$ when
the measurements are exact. Meanwhile, $\m{\beta}^o$ can be recovered.
Since the combinatorial optimization problem is NP-hard and that its
solution is sensitive to measurement noise\cite{donoho2006stable}, a
more reliable approach, $\ell_1$ minimization, is explored in
Theorems \ref{Thm:BPDN_offgrid} and
\ref{Thm:BPDN_offgrid_compressible}.

Theorem \ref{Thm:BPDN_offgrid} states the robustly stable recovery of a $k$-sparse
signal $\m{x}^o$ in SP-CS with the recovery
error being at most proportional to the noise level. Such robust stability is obtained by solving an
$\ell_1$ minimization problem incorporated with the perturbation
structure provided that the D-RIC is sufficiently small with
respect to the perturbation level in terms of $r$. Meanwhile, the
perturbation parameter $\m{\beta}^o$ can be stably recovered on the
support of $\m{x}^o$. As the D-RIP condition is satisfied in Theorem
\ref{Thm:BPDN_offgrid}, the signal recovery error of perturbed
CS is constrained by the noise level $\epsilon$, and the influence
of the perturbation is limited to the coefficient before $\epsilon$.
For example, if $\bar\delta_{4k}\sbra{\m{\Psi}}=0.2$, then
$\twon{\m{x}^*-\m{x}^o}\leq8.48\epsilon, 8.50\epsilon, 11.0\epsilon$
corresponding to $r=0.01, 0.1, 1$, respectively. In the special
noise free case, the recovery is exact. This is similar to that in
standard CS but in contrast to the existing robust signal recovery
result in Subsection \ref{sec:robustrecovery} where the recovery
error exists once a matrix perturbation appears. Another interpretation of the D-RIP
condition in Theorem \ref{Thm:BPDN_offgrid} is that the robustly stable
signal recovery requires that
$r<\sqrt{\frac{1}{2}\sbra{{\bar\delta}_{4k}\sbra{\m{\Psi}}^{-1}-1}^2-1}$
for a fixed matrix $\m{\Psi}$. Using the aforementioned example where
$\bar\delta_{4k}\sbra{\m{\Psi}}=0.2$, the perturbation is required to
satisfy $r<\sqrt{7}$. As a result, our robustly
stable signal recovery result of SP-CS applies to the case of
large perturbation if the D-RIC of $\m{\Psi}$ is sufficiently small
while the existing result does not as demonstrated in Remark
\ref{remark:robustrecovery}.

Theorem \ref{Thm:BPDN_offgrid_compressible} considers general
signals and is a generalized form of Theorem \ref{Thm:BPDN_offgrid}.
In comparison with Theorem \ref{Thm:BPDN} in standard CS, one
more term $C_1\onen{\m{x}^o-\m{x}^k}$ appears in the upper bound of
the recovery error. The robust stability does not hold generally for compressible signals as illustrated in Remark \ref{rem:robuststable} while it is true under an additional assumption $r=O\sbra{k^{-1/2}}$.

The results in this paper generalize that in standard CS.
Without accounting for the symbolic difference between $\delta_{2k}\sbra{\m{\Phi}}$ and
$\bar\delta_{4k}\sbra{\m{\Psi}}$, the conditions in Theorems
\ref{Thm:BPDN} and \ref{Thm:BPDN_offgrid_compressible} coincide, as
well as the upper bounds in (\ref{formu:ReconUpperBound_BPDN}) and
(\ref{formu:boundx}) for the recovery errors, as the perturbation
vanishes or equivalently $r\rightarrow0$. As mentioned before, the
RIP condition for guaranteed stable recovery in standard CS has
been relaxed. Similar techniques may be adopted to possibly relax
the D-RIP condition in SP-CS. While this paper is focused on the
$\ell_1$ minimization approach, it is also possible to modify other
algorithms in standard CS and apply them to SP-CS to
provide similar recovery guarantees.

\subsection{When is the D-RIP satisfied?}

Existing works studying the RIP mainly focus on random matrices. In
standard CS, $\m{\Phi}$ has the $k$-RIP with constant $\delta$ with
a large probability provided that $m\geq C_{\delta}k\log\sbra{n/k}$
and $\m{\Phi}$ has properly scaled i.i.d. subgaussian distributed
entries with constant $C_{\delta}$ depending on $\delta$ and the
distribution\cite{baraniuk2008simple}. The D-RIP can be considered
as a model-based RIP introduced in
\cite{baraniuk2010model}. Suppose that $\m{A}$, $\m{B}$ are mutually
independent and both are i.i.d. subgaussian distributed (the true
sensing matrix $\m{\Phi}=\m{A}+\m{B}\m{\Delta}^o$ is also i.i.d. subgaussian
distributed if $\m{\beta}^o$ is independent of $\m{A}$ and $\m{B}$). The
model-based RIP is determined by the number of subspaces of the
structured sparse signals that are referred to as the D-sparse ones
in the present paper. For $\m{\Psi}=\mbra{\m{A},\m{B}}$, the number of
$2k$-dimensional subspaces for $2k$-D-sparse signals is
$\begin{pmatrix}n\\k\end{pmatrix}$. Consequently, $\m{\Psi}$ has the
$2k$-D-RIP with constant $\delta$ with a large probability also
provided that $m\geq C_{\delta} k\log\sbra{n/k}$ by \cite[Theorem
1]{baraniuk2010model} or \cite[Theorem 3.3]{blumensath2009sampling}.
So, in the case of a high dimensional system and $r\rightarrow0$, the D-RIP condition on
$\m{\Psi}$ in Theorem \ref{Thm:BPDN_offgrid} or
\ref{Thm:BPDN_offgrid_compressible} can be satisfied when the RIP condition on $\m{\Phi}$
(after proper scaling of its columns) in standard CS is met. It
means that the perturbation in SP-CS gradually strengthens the D-RIP
condition for robustly stable signal recovery but there exists no gap between
SP-CS and standard CS in the case of high dimensional systems.

It is noted that there is another way to stably recover the original
signal $\m{x}^o$ in SP-CS. Given the sparse signal case as an
example where $\m{x}^o$ is $k$-sparse. Let
$\m{z}^o=\begin{bmatrix}\m{x}^o\\\m{\beta}^o\odot\m{x}^o\end{bmatrix}$,
and it is $2k$-sparse. The observation model can be written as
$\m{y}=\m{\Psi}\m{z}^o+\m{e}$. Then $\m{z}^o$ and hence, $\m{x}^o$, can
be stably recovered from the problem\footnote{It is hard to
incorporate the knowledge $\m{\beta}^o\in\mbra{-r,r}^n$ into the problem
in (\ref{formu:TPS-BPDN}).}
\equ{\min_{\m{z}}\onen{\m{z}},\st\twon{\m{y}-\m{\Psi}\m{z}}\leq\epsilon
\label{formu:TPS-BPDN}} provided that
$\delta_{4k}\sbra{\m{\Psi}}<\sqrt{2}-1$ by Theorem \ref{Thm:BPDN}. It
looks like that we transformed the perturbation into a signal of
interest. Denote TPS-BPDN the problem in (\ref{formu:TPS-BPDN}). In
a high dimensional system, the condition
$\delta_{4k}\sbra{\m{\Psi}}<\sqrt{2}-1$ requires about twice as many as the measurements that makes the D-RIP condition
$\bar\delta_{4k}\sbra{\m{\Psi}}<\sqrt{2}-1$ hold by \cite[Theorem
1]{baraniuk2010model} corresponding to the D-RIP condition in
Theorem \ref{Thm:BPDN_offgrid} or
\ref{Thm:BPDN_offgrid_compressible} as $r\rightarrow0$.
As a result, for a considerable range of perturbation level, the
D-RIP condition in Theorem \ref{Thm:BPDN_offgrid} or
\ref{Thm:BPDN_offgrid_compressible} for P-BPDN is weaker than that
for TPS-BPDN since it varies slowly for a moderate perturbation (as
an example, $\bar\delta_{4k}\sbra{\m{\Psi}}<0.414, 0.413, 0.409$
corresponds to $r=0,0.1,0.2$ respectively). Numerical simulations in
Subsection \ref{sec:simulation} can verify our conclusion.

\subsection{Relaxation of the Optimal Solution}\label{sec:Concept_Effectiveness}
In Theorem \ref{Thm:BPDN_offgrid_compressible} (Theorem
\ref{Thm:BPDN_offgrid} is a special case), $\sbra{\m{x}^*,\m{\beta}^*}$
is required to be an optimal solution to P-BPDN. Naturally, we would
like to know if the requirement of the optimality is necessary for a
``good'' recovery in the sense that a good recovery validates the
error bounds in (\ref{formu:boundx}) and (\ref{formu:boundbetax})
under the conditions in Theorem \ref{Thm:BPDN_offgrid_compressible}.
Generally speaking, the answer is negative since, regarding the
optimality of $\sbra{\m{x}^*,\m{\beta}^*}$, only
$\onen{\m{x}^*}\leq\onen{\m{x}^o}$ and the feasibility of
$\sbra{\m{x}^*,\m{\beta}^*}$ are used in the proof of Theorem
\ref{Thm:BPDN_offgrid_compressible} in Appendix B. Denote $\cD$ the
feasible domain of P-BPDN, i.e., \equ{\begin{split}\cD=\{
&\sbra{\m{x},\m{\beta}}: \m{\beta}\in\mbra{-r,r}^n, \\
&\twon{\m{y}-\sbra{\m{A}+\m{B}\m{\Delta}}\m{x}}\leq\epsilon \text{ with }\m{\Delta}=\diag\sbra{\m{\beta}}\}.\end{split}}
We have the following corollary.

\begin{cor} Under the assumptions in Theorem \ref{Thm:BPDN_offgrid_compressible}, any $\sbra{\m{x},\m{\beta}}\in\cD$ that meets $\onen{\m{x}}\leq\onen{\m{x}^o}$ satisfies that
{\lentwo\equa{
&&\twon{\m{x}-\m{x}^o}\leq \sbra{C_0k^{-1/2} + C_1}\onen{\m{x}^o-\m{x}^k}+C_2\epsilon, \notag\\
&&\twon{\sbra{\m{\beta}-\m{\beta}^o} \odot \m{x}^k}\notag\\
&&\qquad\qquad\text{ }\leq\sbra{\cC_0k^{-1/2} + \cC_1}\onen{\m{x}^o-\m{x}^k}+\cC_2\epsilon \notag}
}with $C_j, \cC_j$, $j=0,1,2$, as defined in Theorem \ref{Thm:BPDN_offgrid_compressible}.
\label{cor:optimalnotnecessary_offgrid}\end{cor}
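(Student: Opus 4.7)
The plan is to observe that this corollary is essentially a re-reading of the proof of Theorem~\ref{Thm:BPDN_offgrid_compressible}: the full strength of optimality of $\sbra{\m{x}^*,\m{\beta}^*}$ is never actually invoked in Appendix~B, only two scalar consequences of it. Specifically, the argument uses (i) feasibility of the pair, i.e.\ $\m{\beta}^*\in\mbra{-r,r}^n$ and $\twon{\m{y}-\sbra{\m{A}+\m{B}\m{\Delta}^*}\m{x}^*}\leq\epsilon$, and (ii) the $\ell_1$-domination $\onen{\m{x}^*}\leq\onen{\m{x}^o}$. The second item, for the true minimizer, holds because $\sbra{\m{x}^o,\m{\beta}^o}\in\cD$; in the corollary's hypotheses, both properties are granted by assumption for the generic pair $\sbra{\m{x},\m{\beta}}$.

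Accordingly, I would proceed by replaying the proof of Theorem~\ref{Thm:BPDN_offgrid_compressible} verbatim, substituting $\sbra{\m{x},\m{\beta}}$ in place of $\sbra{\m{x}^*,\m{\beta}^*}$ everywhere. First, I would form the error vector $\m{h}=\m{x}-\m{x}^o$ and partition its indices into the support $T_0$ of $\m{x}^k$ together with successive blocks of size $k$ sorted by magnitude on $T_0^c$, yielding the standard tube/cone decomposition used in compressible-signal analyses. Next, feasibility together with $\twon{\m{e}}\leq\epsilon$ and the triangle inequality gives the ``tube'' estimate $\twon{\sbra{\m{A}+\m{B}\m{\Delta}}\m{x}-\sbra{\m{A}+\m{B}\m{\Delta}^o}\m{x}^o}\leq 2\epsilon$, which is exactly the quantity bounded in Appendix~B. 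Then the $\ell_1$-domination $\onen{\m{x}}\leq\onen{\m{x}^o}$ produces the cone inequality controlling the tail energy of $\m{h}$ on $T_0^c$ in terms of $\onen{\m{x}^o-\m{x}^k}$ and the head energy on $T_0$. Finally, the D-RIP hypothesis $\bar{\delta}_{4k}\sbra{\m{\Psi}}<\sbra{\sqrt{2\sbra{1+r^2}}+1}^{-1}$ couples the two estimates, applied to $2k$-D-sparse vectors built from $\m{h}$ and $\m{\beta}\odot\m{x}-\m{\beta}^o\odot\m{x}^o$, and yields the claimed bounds with the very same constants $C_0,C_1,C_2,\cC_0,\cC_1,\cC_2$ as in Theorem~\ref{Thm:BPDN_offgrid_compressible}.

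There is no genuine obstacle beyond performing a careful audit of Appendix~B to confirm that no hidden use of optimality is ever made, for example a first-order KKT condition or a strict inequality $\onen{\m{x}^*}<\onen{\m{x}^o}$; a direct reading shows every step depends only on the two properties (i)--(ii). Once this audit is recorded, the corollary follows with essentially no additional work, which also explains why it is labelled a corollary rather than a theorem. The conceptual payoff, as emphasised in Subsection~\ref{sec:Concept_Effectiveness}, is that any suboptimal algorithmic output that happens to land in $\cD$ and to weakly improve on the $\ell_1$ norm of the ground truth is already a provably ``good'' recovery in the sense of Theorem~\ref{Thm:BPDN_offgrid_compressible}.
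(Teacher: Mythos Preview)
Your proposal is correct and matches the paper's own argument essentially verbatim: the paper explicitly notes that, regarding optimality of $\sbra{\m{x}^*,\m{\beta}^*}$, only feasibility and the inequality $\onen{\m{x}^*}\leq\onen{\m{x}^o}$ are used in the proof of Theorem~\ref{Thm:BPDN_offgrid_compressible}, so the corollary follows directly by substituting any $\sbra{\m{x},\m{\beta}}\in\cD$ with $\onen{\m{x}}\leq\onen{\m{x}^o}$. No additional ideas are required.
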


Corollary \ref{cor:optimalnotnecessary_offgrid} generalizes
Theorem \ref{Thm:BPDN_offgrid_compressible} and its proof follows
directly from that of Theorem \ref{Thm:BPDN_offgrid_compressible}. It
shows that a good recovery in SP-CS is not necessarily an optimal
solution to P-BPDN. A similar result holds in standard CS that generalizes Theorem \ref{Thm:BPDN}, and the proof of
Theorem \ref{Thm:BPDN} in \cite{candes2008restricted} applies
directly to such case.

\begin{cor} Under the assumptions in Theorem \ref{Thm:BPDN}, any $\m{x}$
that meets $\twon{\m{y}-\m{A}\m{x}}\leq\epsilon$
and $\onen{\m{x}}\leq\onen{\m{x}^o}$ satisfies that
\equ{\twon{\m{x}-\m{x}^o}\leq C^{std}_0k^{-1/2}\onen{\m{x}^o-\m{x}^k}+C^{std}_1\epsilon}
with $C^{std}_0, C^{std}_1$ as defined in Theorem \ref{Thm:BPDN}.\label{cor:optimalnotnecessary}
\end{cor}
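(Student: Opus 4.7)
The plan is to observe that the proof of Theorem~\ref{Thm:BPDN} in \cite{candes2008restricted} never invokes the full optimality of the minimizer $\m{x}^*$, but only two consequences of it: (i)~feasibility, $\twon{\m{y}-\m{A}\m{x}^*}\leq\epsilon$, and (ii)~the $\ell_1$ inequality $\onen{\m{x}^*}\leq\onen{\m{x}^o}$, which itself follows from optimality of $\m{x}^*$ combined with feasibility of the true signal $\m{x}^o$. Since both are imposed directly as hypotheses on $\m{x}$ in the corollary, the standard-CS argument transfers verbatim, and I would simply retrace the estimates with $\m{x}^*$ replaced by $\m{x}$.

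Concretely, first I set $\m{h}=\m{x}-\m{x}^o$ and let $T$ denote the support of $\m{x}^k$. I sort $\abs{\m{h}}$ on $T^c$ in decreasing order and partition $T^c$ into consecutive blocks $T_1,T_2,\dots$ of size $k$. Property~(ii), combined with the identity $\onen{\m{x}^o}=\onen{\m{x}^o_T}+\onen{\m{x}^o-\m{x}^k}$ and a triangle-inequality rearrangement of $\onen{\m{x}^o+\m{h}}\leq\onen{\m{x}^o}$, yields the standard cone inequality $\onen{\m{h}_{T^c}}\leq\onen{\m{h}_T}+2\onen{\m{x}^o-\m{x}^k}$, which in turn bounds the tail energy via $\sum_{j\geq 2}\twon{\m{h}_{T_j}}\leq k^{-1/2}\onen{\m{h}_{T^c}}$. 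Next, property~(i) together with the feasibility of $\m{x}^o$ gives $\twon{\m{A}\m{h}}\leq 2\epsilon$. Invoking the $2k$-RIP of $\m{A}$ on the $2k$-sparse block $\m{h}_{T\cup T_1}$, together with the near-orthogonality estimate of \cite{candes2008restricted} that controls the cross terms $\abs{\inp{\m{A}\m{h}_{T\cup T_1},\m{A}\m{h}_{T_j}}}$ by $\sqrt{2}\,\delta_{2k}\sbra{\m{A}}\twon{\m{h}_{T\cup T_1}}\twon{\m{h}_{T_j}}$, produces a bound of the form $\twon{\m{h}_{T\cup T_1}}\leq\alpha\epsilon+\beta k^{-1/2}\onen{\m{x}^o-\m{x}^k}$ with constants $\alpha,\beta$ that simplify to the $C^{std}_1$ and $C^{std}_0$ of Theorem~\ref{Thm:BPDN}. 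The claimed inequality then follows from the pyramid bound $\twon{\m{h}}\leq\twon{\m{h}_{T\cup T_1}}+\sum_{j\geq 2}\twon{\m{h}_{T_j}}$ and standard algebraic manipulation.

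There is no genuine obstacle here: the corollary is essentially a bookkeeping observation that Candès' proof depends on $\m{x}^*$ only through~(i) and~(ii), and the constants it produces are unchanged when $\m{x}^*$ is replaced by any other element of the set $\lbra{\m{x}:\twon{\m{y}-\m{A}\m{x}}\leq\epsilon,\,\onen{\m{x}}\leq\onen{\m{x}^o}}$. This parallels the SP-CS analog in Corollary~\ref{cor:optimalnotnecessary_offgrid}, whose proof the authors already noted is obtained by exactly the same inspection of the P-BPDN argument in Appendix~B. If desired, the corollary could alternatively be stated as: Theorem~\ref{Thm:BPDN}'s recovery bound holds for any feasible $\m{x}$ whose $\ell_1$ norm does not exceed that of the truth, a fact which makes the ``effectiveness'' of nonconvex or approximate solvers meaningful even when global optimality is unattainable.
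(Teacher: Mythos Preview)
Your proposal is correct and follows essentially the same approach as the paper: the authors simply remark that the proof of Theorem~\ref{Thm:BPDN} in \cite{candes2008restricted} applies directly, since regarding optimality only feasibility and the inequality $\onen{\m{x}^*}\leq\onen{\m{x}^o}$ are used. Your explicit retracing of the cone/tail and RIP estimates is exactly that argument spelled out.
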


\begin{figure}
  \centering
  \includegraphics[width=2.2in]{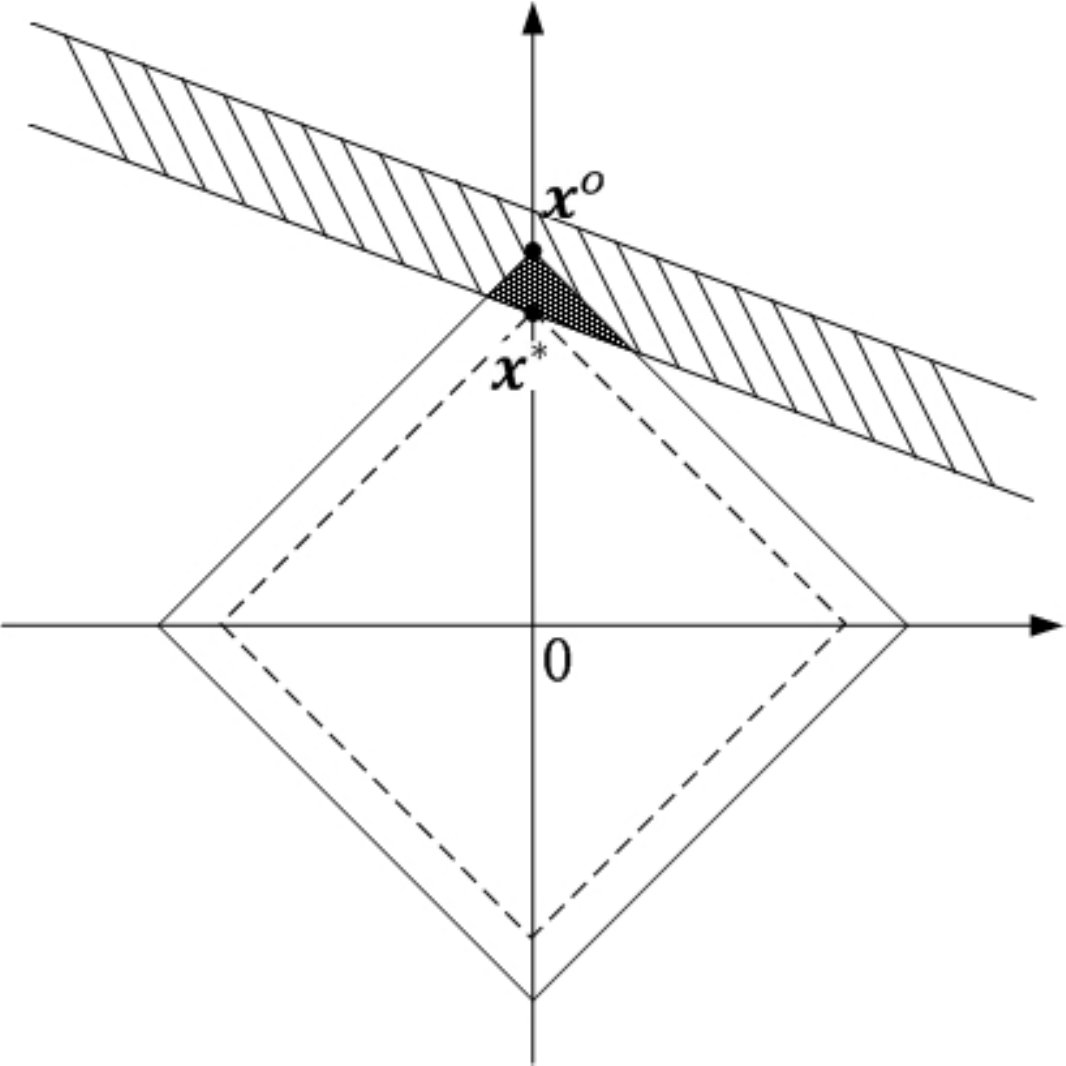}
  \caption{Illustration of Corollary \ref{cor:optimalnotnecessary}. The shaded band area refers to the
  feasible domain of BPDN. The triangular area, the intersection of the feasible domain and the $\ell_1$ ball $\lbra{\m{x}:\onen{\m{x}}\leq\onen{\m{x}^o}}$, is the set of all good recoveries.}
  \label{Fig:illu_solution_BPDN}
\end{figure}

An illustration of Corollary \ref{cor:optimalnotnecessary} is
presented in Fig. \ref{Fig:illu_solution_BPDN}, where the shaded
band area refers to the feasible domain of BPDN in
(\ref{formu:BPDN}) and all points in the triangular area, the
intersection of the feasible domain of BPDN and the $\ell_1$ ball
$\lbra{\m{x}:\onen{\m{x}}\leq\onen{\m{x}^o}}$, are good candidates
for recovery of $\m{x}^o$. The reason why one seeks for the optimal
solution $\m{x}^*$ is to guarantee that the inequality
$\onen{\m{x}}\leq\onen{\m{x}^o}$ holds since $\onen{\m{x}^o}$ is
generally unavailable {\em a priori.} Corollary
\ref{cor:optimalnotnecessary} can explain why a satisfactory
recovery can be obtained in practice using some algorithm that may
not produce an optimal solution to BPDN, e.g., rONE-L1
\cite{yang2010orthonormal}. Corollaries
\ref{cor:optimalnotnecessary_offgrid} and
\ref{cor:optimalnotnecessary} are useful for checking the
effectiveness of an algorithm in the case when the output cannot be
guaranteed to be optimal.\footnote{It is common when the problem to
be solved is nonconvex, such as P-BPDN as discussed in Section
\ref{sec:algorithms} and $\ell_p$ ($0\leq p<1$)
minimization approaches
\cite{chartrand2007exact,chartrand2008restricted, saab2008stable} in
standard CS. In addition, Corollaries
\ref{cor:optimalnotnecessary_offgrid} and
\ref{cor:optimalnotnecessary} can be readily extended to the
$\ell_p$ ($0\leq p<1$) minimization approaches.} Namely, an
algorithm is called effective in solving some $\ell_1$
minimization problem if it can produce a feasible solution with its
$\ell_1$ norm no larger than that of the original signal. Similar
ideas have been adopted in \cite{yang2011sparsity,yang2011phase}.

\subsection{Application to DOA Estimation} \label{sec:DOAapplication}
DOA estimation is a classical problem in signal processing with many practical applications. Its research has recently been advanced owing to the development of CS based methods, e.g., $\ell_1$-SVD\cite{malioutov2005sparse}. This subsection shows that the proposed SP-CS framework is applicable to the DOA estimation problem and hence has practical relevance. Consider $k$ narrowband far-field sources $s_j$, $j=1,\cdots,k$, impinging on an $m$-element uniform linear array (ULA) from directions $d_j$ with $d_j\in[0,\pi)$, $j=1,\cdots,k$. Denote $\m{\theta}=\mbra{\cos\sbra{d_1},\dots,\cos\sbra{d_k}}^T\in(-1,1]^k$. For convenience, we consider the estimation of $\m{\theta}$ rather than that of $\m{d}$ hereafter. Moreover, we consider the noise free case for simplicity of exposition. Then the observation model is $\m{y} = \m{A}\sbra{\m{\theta}}\m{s}$ according to \cite{krim1996two}, where $\m{y}\in\bC^m$ denotes the vector of sensor measurements and $\m{A}\sbra{\m{\theta}}\in\bC^{m\times k}$ denotes the sensing/measurement matrix with respect to $\m{\theta}$ with $\m{A}_{j}\sbra{\m{\theta}}=\m{a}\sbra{\theta_j}$ and $a_l\sbra{\theta_j}=\frac{1}{\sqrt{m}}\exp\lbra{i\pi\sbra{l-\frac{m+1}{2}}\theta_j}$, $j=1,\cdots,k$, $l=1,\cdots,m$, where $i=\sqrt{-1}$. The objective of DOA estimation is to estimate $\m{\theta}$ given $\m{y}$ and possibly $k$ as well. Since $k$ is typically small, CS based methods have been motivated in recent years. Let $\tilde{\m{\theta}}=\lbra{\frac{1}{n}-1, \frac{3}{n}-1, \cdots, 1-\frac{1}{n}}$ be a uniform sampling grid in $\m{\theta}$ range $(-1,1]$ where $n$ denotes the grid number (without loss of generality, we assume that $n$ is an even number). In existing standard CS based methods $\tilde{\m{\theta}}$ actually serves as the set of all candidate DOA estimates. As a result, their estimation accuracy is limited by the grid density since for some $\theta_j$, $j\in\lbra{1,\cdots,k}$, the best estimate of $\theta_j$ is its nearest grid point in $\tilde{\m{\theta}}$. It can be easily shown that a lower bound for the mean squared estimation error of each $\theta_j$ is $\text{LB}=\frac{1}{3n^2}$ by assuming that $\theta_j$ is uniformly distributed in one or more grid intervals.

An off-grid model has been studied in \cite{zhu2011sparsity,yang2011off} that takes into account effects of the off-grid DOAs and introduces a structured matrix perturbation in the measurement matrix. For completeness, we re-derive it using Taylor expansion. Suppose $\theta_j\notin\tilde{\m{\theta}}$ for some $j\in\lbra{1,\cdots,k}$ and that $\tilde{\theta}_{l_j}$, $l_j\in\lbra{1, \cdots, n}$, is the nearest grid point to $\theta_j$. By Taylor expansion we have
\equ{\m{a}\sbra{\theta_j}=\m{a}\sbra{\tilde{\theta}_{l_j}}+\m{b}\sbra{\tilde{\theta}_{l_j}} \sbra{\theta_{j} -\tilde{\theta}_{l_j}} + \m{R}_j \label{formu:offgridlinearization}}
with $\m{b}\sbra{\tilde{\theta}_{l_j}}=\m{a}'\sbra{\tilde{\theta}_{l_j}}$ and $\m{R}_j$ being a remainder term with respect to $\theta_j$. Denote $\kappa=\frac{\pi}{2}\sqrt{\frac{m^2-1}{3}}$, $\m{A}=\mbra{\m{a}\sbra{\tilde{\theta}_1},\cdots,\m{a}\sbra{\tilde{\theta}_n}}$, $\m{B}={\kappa}^{-1}\mbra{\m{b}\sbra{\tilde{\theta}_1},\cdots,\m{b}\sbra{\tilde{\theta}_n}}$, and for $l=1,\cdots,n$,
\[\begin{array}{lll}\beta_l^o=\kappa\sbra{\theta_j-\tilde{\theta}_{l_j}},& x_l^o=s_j,& \text{if }l=l_j \text{ for any } j\in\lbra{1,\cdots,k};\\\beta_l^o=0,&x_l^o=0,&\text{otherwise},\end{array}\]
with $l_j\in\lbra{1,\cdots,n}$ and $\tilde{\theta}_{l_j}$ being the nearest grid to a source $\theta_j$, $j\in\lbra{1,\cdots,k}$. It is easy to show that $\zeron{\m{x}^o}\leq k$, each column of $\m{A}$ and $\m{B}$ has unit norm, and $\m{\beta}^o\in\mbra{-r,r}^n$ with $r=\frac{\kappa}{n}=\frac{\pi}{2n}\sqrt{\frac{m^2-1}{3}}$. In addition, we let $\m{e}=\m{R}\m{s}$ with $\m{R}=\mbra{\m{R}_1,\cdots,\m{R}_k}$, and $\epsilon=\frac{\sqrt{k}\twon{\m{s}}\pi^2}{8n^2}\sqrt{\frac{3m^4-10m^2+7}{15}}$ such that $\twon{\m{e}}\leq\epsilon$ (the information of $k$ and $\twon{\m{s}}$ is used). The derivation for the setting of $\epsilon$ is provided in Appendix F. Then the DOA estimation model can be written into the form of our studied model in (\ref{formu:observationmodel_perturbed}). The only differences are that $\m{A}$, $\m{B}$, $\m{x}^o$ and $\m{e}$ are in the complex domain rather than the real domain and that $\m{e}$ denotes a modeling error term rather than the measurement noise. It is noted that the robust stability results in SP-CS apply straightforward to such complex signal case with few modifications. The objective turns to recovering $\m{x}^o$ (its support actually) and $\m{\beta}^o$. According to Theorem \ref{Thm:BPDN_offgrid} $\m{x}^o$ and $\m{\beta}^o$ can be stably recovered if the D-RIP condition is satisfied. Denote $\widehat{\m{x}}$ the recovered $\m{x}^o$, $\widehat{\m{\beta}}$ the recovered $\m{\beta}^o$, and $\cI$ the support of $\widehat{\m{x}}$. Then we obtain the recovered $\m{\theta}$: $\widehat{\m{\theta}}=\tilde{\m{\theta}}_{\cI}+\kappa^{-1}\widehat{\m{\beta}}_{\cI}$ where $\m{v}_{\cI}$ keeps only entries of a vector $\m{v}$ on the index set $\cI$. The empirical results in Subsection \ref{sec:application} will illustrate the merits of applying the SP-CS framework to estimate DOAs.

\begin{rem}~
 \begin{itemize}
  \item[(1)] Within the scope of DOA estimation, this work is related to spectral CS introduced in \cite{duarte2010spectral}. To obtain an accurate solution, the authors of \cite{duarte2010spectral} adopt a very dense sampling grid (that is necessary for any standard CS based methods according to the mentioned lower bound for the mean squared estimation error) and then prohibit a solution whose support contains near-located indices (that correspond to highly coherent columns in the overcomplete dictionary). In this paper we show that accurate DOA estimation is possible by using a coarse grid and jointly estimating the off-grid distance (the distance from a true DOA to its nearest grid point).
  \item[(2)] The off-grid DOA estimation problem has been studied in \cite{zhu2011sparsity,yang2011off}. The STLS solver in \cite{zhu2011sparsity} obtains a maximum a posteriori solution if $\m{\beta}^o$ is Gaussian distributed. But such a condition does not hold in the off-grid DOA estimation problem. The SBI solver in \cite{yang2011off} proposed by the authors is based on the same model as in (\ref{formu:observationmodel_perturbed}) and within the framework of Bayesian CS \cite{ji2008bayesian}.
  \item[(3)] The proposed P-BPDN can be extended to the multiple measurement vectors case like $\ell_1$-SVD to deal with DOA estimation with multiple snapshots.
 \end{itemize}
\end{rem}

\section{Algorithms for P-BPDN}\label{sec:algorithms}

\subsection{Special Case: Positive Signals}\label{sec:positivesignal}
This subsection studies a
special case where the original signal $\m{x}^o$ is positive-valued
(except zero entries). Such a case has been studied in standard
CS \cite{donoho2009message,stojnic2009various}. By incorporating the
positiveness of $\m{x}^o$, P-BPDN is modified into the positive
P-BPDN (PP-BPDN) problem
\[\min_{\m{x},\m{\beta}}\m{1}^T\m{x},\st \left\{\begin{array}{l}\twon{\m{y}-\sbra{\m{A}+\m{B}\m{\Delta}}\m{x}}\leq\epsilon,\\
\m{x}\succcurlyeq
\m{0},\\r\m{1}\succcurlyeq\m{\beta}\succcurlyeq-r\m{1},\end{array}\right.\]
where $\succcurlyeq$ is $\geq$ with an elementwise operation and
$\m{0}$, $\m{1}$ are column vectors composed of $0$, $1$ respectively
with proper dimensions. It is noted that the robustly stable signal recovery
results in the present paper apply directly to the solution to
PP-BPDN in such case. This subsection shows that the nonconvex
PP-BPDN problem can be transformed into a convex one and hence its
optimal solution can be efficiently obtained. Denote
$\m{p}=\m{\beta}\odot\m{x}$. A new, convex problem $\sbra{P_1}$ is
introduced as follows.
\[\sbra{P_1}\quad\min_{\m{x},\m{p}}\m{1}^T\m{x},\st \left\{\begin{array}{l}\twon{\m{y}-\m{\Psi}\begin{bmatrix}\m{x}\\\m{p}\end{bmatrix}}\leq\epsilon,\\
\m{x}\succcurlyeq \m{0},\\r\m{x}\succcurlyeq\m{p}\succcurlyeq-r\m{x}.\end{array}\right.\]

\begin{thm} Problems PP-BPDN and $\sbra{P_1}$ are equivalent in the sense that, if $\sbra{\m{x}^*,\m{\beta}^*}$ is an optimal solution to PP-BPDN, then there exists $\m{p}^*=\m{\beta}^*\odot\m{x}^*$ such that $\sbra{\m{x}^*,\m{p}^*}$ is an optimal solution to $\sbra{P_1}$, and that, if $\sbra{\m{x}^*,\m{p}^*}$ is an optimal solution to $\sbra{P_1}$, then there exists $\m{\beta}^*$ with $\beta^*_j=\left\{\begin{array}{l}p_j^*/x_j^*,\\0,\end{array} \begin{array}{l}\text{ if }x_j^*>0;\\\text{ otherwise}\end{array}\right.$ such that $\sbra{\m{x}^*,\m{\beta}^*}$ is an optimal solution to PP-BPDN.\label{thm:equivalenceP1P2}
\end{thm}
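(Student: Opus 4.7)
The plan is to prove equivalence by exhibiting the two natural mappings between feasible points and verifying that they are inverses (on the relevant equivalence classes) and that both preserve the objective value $\m{1}^T\m{x}$. Since the objective depends only on $\m{x}$, equality of optimal values will then force the map to carry optimal solutions to optimal solutions.

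First I would establish the forward mapping. Given a feasible $(\m{x},\m{\beta})$ for PP-BPDN, set $\m{p}=\m{\beta}\odot\m{x}$. Then $\sbra{\m{A}+\m{B}\m{\Delta}}\m{x}=\m{A}\m{x}+\m{B}\m{p}=\m{\Psi}\begin{bmatrix}\m{x}\\\m{p}\end{bmatrix}$, so the residual constraint transfers verbatim. Since $\m{x}\succcurlyeq\m{0}$ and $-r\m{1}\preccurlyeq\m{\beta}\preccurlyeq r\m{1}$, taking the Hadamard product with $\m{x}$ preserves the inequalities componentwise, giving $-r\m{x}\preccurlyeq\m{p}\preccurlyeq r\m{x}$. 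Hence $(\m{x},\m{p})$ is feasible for $\sbra{P_1}$ with the same objective $\m{1}^T\m{x}$.

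Next I would construct the backward mapping. Given a feasible $(\m{x},\m{p})$ for $\sbra{P_1}$, define $\m{\beta}$ componentwise as in the statement ($\beta_j=p_j/x_j$ when $x_j>0$, else $\beta_j=0$). The delicate step, which I expect to be the only nontrivial point, is showing $\m{\beta}\odot\m{x}=\m{p}$ and $\m{\beta}\in[-r,r]^n$. For indices with $x_j>0$ the constraint $-rx_j\leq p_j\leq rx_j$ gives $|\beta_j|\leq r$ and $\beta_j x_j=p_j$ by construction. For indices with $x_j=0$, the same constraint collapses to $0\leq p_j\leq 0$, forcing $p_j=0$, so that $\beta_j x_j=0=p_j$ and $\beta_j=0\in[-r,r]$. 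Feasibility for PP-BPDN then follows from $\sbra{\m{A}+\m{B}\m{\Delta}}\m{x}=\m{A}\m{x}+\m{B}\sbra{\m{\beta}\odot\m{x}}=\m{A}\m{x}+\m{B}\m{p}=\m{\Psi}\begin{bmatrix}\m{x}\\\m{p}\end{bmatrix}$, and the objective is again $\m{1}^T\m{x}$.

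Finally I would close the argument by comparing optimal values. Both mappings preserve feasibility and the objective, so the two problems have the same optimal value $v^*$. If $(\m{x}^*,\m{\beta}^*)$ is optimal for PP-BPDN, the forward map produces a feasible $(\m{x}^*,\m{p}^*)$ for $\sbra{P_1}$ of value $v^*$, hence optimal. Conversely, if $(\m{x}^*,\m{p}^*)$ is optimal for $\sbra{P_1}$, the backward map produces a feasible $(\m{x}^*,\m{\beta}^*)$ for PP-BPDN of value $v^*$, hence optimal. The only subtlety is the non-uniqueness of $\m{\beta}^*$ on the zero-support of $\m{x}^*$, which is immaterial because those coordinates contribute nothing to either constraint or objective; the convention $\beta_j^*=0$ on that set is simply one convenient choice.
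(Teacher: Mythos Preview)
Your proof is correct and rests on exactly the same feasibility-preserving, objective-preserving maps as the paper's argument; the only cosmetic difference is that the paper phrases the optimality transfer by contradiction while you argue it directly via equality of optimal values. Your treatment of the case $x_j=0$ (forcing $p_j=0$ from $-rx_j\le p_j\le rx_j$) is in fact more explicit than the paper's ``it is easy to show.''
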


\begin{proof} We only prove the first part of Theorem \ref{thm:equivalenceP1P2} using contradiction. The second part follows similarly. Suppose that $\sbra{\m{x}^*,\m{p}^*}$ with $\m{p}^*=\m{\beta}^*\odot\m{x}^*$ is not an optimal solution to $\sbra{P_1}$. Then there exists $\sbra{\m{x}',\m{p}'}$ in the feasible domain of $\sbra{P_1}$ such that $\onen{\m{x}'}<\onen{\m{x}^*}$. Define $\m{\beta}'$ as $\beta'_j=\left\{\begin{array}{l}p'_j/x'_j,\\0,\end{array} \begin{array}{l}\text{ if }x'_j>0;\\\text{ otherwise}\end{array}\right.$. It is easy to show that $\sbra{\m{x}',\m{\beta}'}$ is a feasible solution to PP-BPDN. By $\onen{\m{x}'}<\onen{\m{x}^*}$ we conclude that $\sbra{\m{x}^*,\m{\beta}^*}$ is not an optimal solution to PP-BPDN, which leads to contradiction.
\end{proof}

Theorem \ref{thm:equivalenceP1P2} states that an optimal solution to PP-BPDN can be efficiently obtained by solving the convex problem $\sbra{P_1}$.

\subsection{AA-P-BPDN: Alternating Algorithm for P-BPDN}
For general signals, P-BPDN in (\ref{formu:l1_offgrid_Phiv}) is nonconvex. A simple method is to
solve a series of BPDN problems with
{\lentwo\equa{\m{x}^{\sbra{j+1}} &=& \arg\min_{\m{x}}\onen{\m{x}},
\st \notag\\
&&\twon{\m{y}-\sbra{\m{A}+\m{B}\m{\Delta}^{\sbra{j}}}\m{x}}\leq\epsilon,
\label{formu:solve_xj} \\ \m{\beta}^{\sbra{j+1}} &=&
\arg\min_{\m{\beta}\in\mbra{-r,r}^n}
\twon{\m{y}-\sbra{\m{A}+\m{B}\m{\Delta}}\m{x}^{\sbra{j+1}}}\label{formu:solve_betaj}}
}starting from $\m{\beta}^{\sbra{0}}=\m{0}$, where the superscript
$^{\sbra{j}}$ indicates the $j$th iteration and
$\m{\Delta}^{\sbra{j}}=\diag\sbra{\m{\beta}^{\sbra{j}}}$. Denote AA-P-BPDN
the alternating algorithm defined by (\ref{formu:solve_xj}) and
(\ref{formu:solve_betaj}). To analyze AA-P-BPDN, we first present
the following two lemmas.

\begin{lem} For a matrix sequence $\lbra{\m{\Phi}^{\sbra{j}}}_{j=1}^{\infty}$
composed of fat matrices, let $\cD^j=\lbra{\m{v}: \twon{\m{y}-\m{\Phi}^{\sbra{j}}\m{v}}\leq\epsilon}$, $j=1,2,\cdots$, and $\cD^*=\lbra{\m{v}: \twon{\m{y}-\m{\Phi}^*\m{v}}\leq\epsilon}$ with $\epsilon>0$. If $\m{\Phi}^{\sbra{j}}\rightarrow\m{\Phi}^*$, as $j\rightarrow+\infty$, then for any $\m{v}\in\cD^*$ there exists a sequence $\lbra{\m{v}^{\sbra{j}}}_{j=1}^{\infty}$ with $\m{v}^{\sbra{j}}\in\cD^{\sbra{j}}$, $j=1,2,\cdots$, such that $\m{v}^{\sbra{j}}\rightarrow\m{v}$, as $j\rightarrow+\infty$. \label{Lemma:continuousnullspace}
\end{lem}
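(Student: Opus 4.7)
The plan is to fix an arbitrary $\m{v}\in\cD^*$ and construct $\m{v}^{\sbra{j}}\in\cD^{\sbra{j}}$ that converges to $\m{v}$ by producing a small perturbation of $\m{v}$ that cancels the mismatch between $\m{\Phi}^{\sbra{j}}$ and $\m{\Phi}^*$. Introduce the residuals $\m{r}^*=\m{y}-\m{\Phi}^*\m{v}$ and $\m{r}^{\sbra{j}}=\m{y}-\m{\Phi}^{\sbra{j}}\m{v}$, so that $\twon{\m{r}^*}\leq\epsilon$ by assumption. Since $\m{\Phi}^{\sbra{j}}\to\m{\Phi}^*$ in any matrix norm, continuity of matrix--vector multiplication gives $\m{r}^{\sbra{j}}\to\m{r}^*$, and hence $\twon{\m{r}^{\sbra{j}}}\to\twon{\m{r}^*}\leq\epsilon$.

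I would then split the argument into two regimes. Whenever $\twon{\m{r}^{\sbra{j}}}\leq\epsilon$ -- which is automatic for all sufficiently large $j$ in the strict case $\twon{\m{r}^*}<\epsilon$ -- the choice $\m{v}^{\sbra{j}}=\m{v}$ itself lies in $\cD^{\sbra{j}}$ and nothing more is needed. The only nontrivial regime is the boundary case where $\twon{\m{r}^*}=\epsilon$ and $\twon{\m{r}^{\sbra{j}}}>\epsilon$ infinitely often. In that regime my strategy is to perturb $\m{v}$ to $\m{v}^{\sbra{j}}=\m{v}+\m{u}^{\sbra{j}}$ chosen to enforce the exact identity $\m{\Phi}^{\sbra{j}}\m{v}^{\sbra{j}}=\m{\Phi}^*\m{v}$; once this holds, $\m{y}-\m{\Phi}^{\sbra{j}}\m{v}^{\sbra{j}}=\m{r}^*$ automatically, so $\m{v}^{\sbra{j}}\in\cD^{\sbra{j}}$. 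The required $\m{u}^{\sbra{j}}$ solves the underdetermined linear system $\m{\Phi}^{\sbra{j}}\m{u}^{\sbra{j}}=\sbra{\m{\Phi}^*-\m{\Phi}^{\sbra{j}}}\m{v}$, which is solvable because each $\m{\Phi}^{\sbra{j}}$ is fat and, for $j$ large, of full row rank; the minimum-norm solution $\m{u}^{\sbra{j}}=\sbra{\m{\Phi}^{\sbra{j}}}^{\dagger}\sbra{\m{\Phi}^*-\m{\Phi}^{\sbra{j}}}\m{v}$ satisfies $\twon{\m{u}^{\sbra{j}}}\leq\twon{\sbra{\m{\Phi}^{\sbra{j}}}^{\dagger}}\cdot\twon{\m{\Phi}^*-\m{\Phi}^{\sbra{j}}}\cdot\twon{\m{v}}$.

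The main obstacle I expect is the uniform control of $\twon{\sbra{\m{\Phi}^{\sbra{j}}}^{\dagger}}$. This is automatic once $\m{\Phi}^*$ has full row rank: the Moore--Penrose pseudo-inverse is continuous on the (open) set of full-row-rank fat matrices, so $\twon{\sbra{\m{\Phi}^{\sbra{j}}}^{\dagger}}\to\twon{\sbra{\m{\Phi}^*}^{\dagger}}<\infty$, which together with $\twon{\m{\Phi}^*-\m{\Phi}^{\sbra{j}}}\to 0$ forces $\m{u}^{\sbra{j}}\to\m{0}$ and hence $\m{v}^{\sbra{j}}\to\m{v}$. For the envisioned use in AA-P-BPDN, where $\m{\Phi}^{\sbra{j}}=\m{A}+\m{B}\m{\Delta}^{\sbra{j}}$ with $\m{\Delta}^{\sbra{j}}$ bounded and $\m{\Psi}=\mbra{\m{A},\m{B}}$ satisfying the D-RIP assumed throughout this paper, the resulting $\m{\Phi}^{\sbra{j}}$ have full row rank and the pseudo-inverse norms stay bounded, so the argument closes without further hypotheses; if a rank-deficient limit had to be accommodated, an extra step -- splitting $\sbra{\m{\Phi}^*-\m{\Phi}^{\sbra{j}}}\m{v}$ into its orthogonal projection onto the range of $\m{\Phi}^{\sbra{j}}$ plus a vanishing out-of-range remainder -- would be required.
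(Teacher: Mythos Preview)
Your interior case coincides with the paper's: when $\twon{\m{y}-\m{\Phi}^*\m{v}}<\epsilon$, continuity of $\m{\Phi}\mapsto\m{y}-\m{\Phi}\m{v}$ already gives $\m{v}\in\cD^{\sbra{j}}$ for all large $j$, so $\m{v}^{\sbra{j}}=\m{v}$ suffices.

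The boundary case is where you diverge from the paper, and this is where a gap appears. Your pseudo-inverse correction $\m{u}^{\sbra{j}}=\sbra{\m{\Phi}^{\sbra{j}}}^{\dagger}\sbra{\m{\Phi}^*-\m{\Phi}^{\sbra{j}}}\m{v}$ needs $\m{\Phi}^{\sbra{j}}$ to be surjective (so the system is solvable) and $\twon{\sbra{\m{\Phi}^{\sbra{j}}}^{\dagger}}$ to stay bounded (so $\m{u}^{\sbra{j}}\to\m{0}$). You justify this by invoking the D-RIP on $\m{\Psi}=\mbra{\m{A},\m{B}}$, but the D-RIP controls $\twon{\m{\Psi}\m{v}}$ only on D-sparse subspaces and says nothing about the row rank of $\m{A}+\m{B}\m{\Delta}$; for instance, two identical rows in $\m{\Psi}$ are perfectly compatible with any D-RIP constant yet make every $\m{\Phi}^{\sbra{j}}$ rank-deficient. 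The fallback you sketch for the rank-deficient case --- projecting $\sbra{\m{\Phi}^*-\m{\Phi}^{\sbra{j}}}\m{v}$ onto the range of $\m{\Phi}^{\sbra{j}}$ --- does not obviously close either: the residual becomes $\m{r}^*$ plus an out-of-range remainder, whose norm need not stay $\leq\epsilon$, and $\twon{\sbra{\m{\Phi}^{\sbra{j}}}^{\dagger}}$ can still blow up along the sequence.

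The paper avoids all of this machinery. For a boundary point $\m{v}$ it picks a sequence of \emph{interior} points $\m{v}_{\sbra{l}}\to\m{v}$ inside $\cD^*$, applies the already-proved interior case to each $\m{v}_{\sbra{l}}$ to get $\m{v}_{\sbra{l}}^{\sbra{j}}\in\cD^{\sbra{j}}$ with $\m{v}_{\sbra{l}}^{\sbra{j}}\to\m{v}_{\sbra{l}}$, and then extracts a diagonal sequence $\m{v}_{\sbra{j}}^{\sbra{j}}$. No pseudo-inverse, no rank hypothesis --- only that $\cD^*$ has nonempty interior, i.e.\ $\inf_{\m{w}}\twon{\m{y}-\m{\Phi}^*\m{w}}<\epsilon$, which is strictly weaker than full row rank of $\m{\Phi}^*$.
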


Lemma \ref{Lemma:continuousnullspace} studies the variation of
feasible domains $\cD^j$, $j=1,2,\cdots$, of a series of BPDN
problems whose sensing matrices $\m{\Phi}^{\sbra{j}}$, $j=1,2,\cdots$,
converge to $\m{\Phi}^*$. It states that the sequence of the feasible
domains also converges to $\cD^*$ in the sense that for any point in
$\cD^*$, there exists a sequence of points, each of which belongs to
one $\cD^j$, that converges to the point. To prove Lemma
\ref{Lemma:continuousnullspace}, we first show that it holds for any
interior point of $\cD^*$ by constructing such a sequence. Then we
show that it also holds for a boundary point of $\cD^*$ by that for
any boundary point there exists a sequence of interior points of
$\cD^*$ that converges to it. The detailed proof is given in
Appendix C.

\begin{lem} An optimal solution $\m{x}^*$ to the BPDN problem in (\ref{formu:BPDN}) satisfies that $\m{x}^*=\m{0}$, if $\twon{\m{y}}\leq\epsilon$, or $\twon{\m{y}-\m{\Phi}\m{x}^*}=\epsilon$, otherwise.\label{Lemma:BPDN_optimalx}
\end{lem}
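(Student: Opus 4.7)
The plan is to handle the two cases of the lemma separately, both by direct arguments from optimality.

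For the first case, suppose $\twon{\m{y}}\leq\epsilon$. Then $\m{x}=\m{0}$ is feasible, since $\twon{\m{y}-\m{\Phi}\cdot\m{0}}=\twon{\m{y}}\leq\epsilon$. Since $\onen{\m{x}}\geq 0$ for every $\m{x}$ with equality iff $\m{x}=\m{0}$, the zero vector attains the global minimum of the objective and is therefore an optimal solution.

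For the second case, suppose $\twon{\m{y}}>\epsilon$ and let $\m{x}^*$ be any optimal solution of (\ref{formu:BPDN}). First I would observe that $\m{x}^*\neq\m{0}$: otherwise feasibility would give $\twon{\m{y}}=\twon{\m{y}-\m{\Phi}\m{x}^*}\leq\epsilon$, contradicting the assumption. Hence $\onen{\m{x}^*}>0$. I would then argue by contradiction, assuming $\twon{\m{y}-\m{\Phi}\m{x}^*}<\epsilon$. Consider the continuous scalar function $g(t)=\twon{\m{y}-t\m{\Phi}\m{x}^*}$ on $[0,1]$. By construction $g(1)<\epsilon$, so by continuity there exists $t_0\in[0,1)$ such that $g(t)\leq\epsilon$ for all $t\in[t_0,1]$. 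Picking any such $t\in(t_0,1)$, the vector $\m{x}_t=t\m{x}^*$ is feasible for (\ref{formu:BPDN}), yet $\onen{\m{x}_t}=t\onen{\m{x}^*}<\onen{\m{x}^*}$ (using $\onen{\m{x}^*}>0$), contradicting the optimality of $\m{x}^*$. Therefore $\twon{\m{y}-\m{\Phi}\m{x}^*}=\epsilon$.

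There is no real obstacle in this proof; the lemma is a routine consequence of the convexity of the constraint, the positive homogeneity of the $\ell_1$ norm, and continuity of $g$. The only subtlety worth flagging is the preliminary step $\m{x}^*\neq\m{0}$ in the second case, which is what guarantees that shrinking $\m{x}^*$ by a factor $t<1$ strictly decreases the objective.
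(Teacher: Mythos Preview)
Your proof is correct and follows essentially the same approach as the paper. Both argue the second case by contradiction via the scaling $t\mapsto t\m{x}^*$ and continuity of $t\mapsto\twon{\m{y}-t\m{\Phi}\m{x}^*}$; the only cosmetic difference is that the paper invokes the intermediate value theorem (using $g(0)>\epsilon$ and $g(1)<\epsilon$) to hit $\epsilon$ exactly, whereas you use continuity at $t=1$ to find a whole interval of feasible scalings---either variant yields the desired strictly smaller feasible point.
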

\begin{proof}
It is trivial for the case where $\twon{\m{y}}\leq\epsilon$.
Consider the other case where $\twon{\m{y}}>\epsilon$. Note first
that $\m{x}^*\neq\m{0}$. We use contradiction to show that the
equality $\twon{\m{y}-\m{\Phi}\m{x}^*}=\epsilon$ holds. Suppose that
$\twon{\m{y}-\m{\Phi}\m{x}^*}<\epsilon$. Introduce
$f\sbra{\theta}=\twon{\m{y}-\theta\m{\Phi}\m{x}^*}$. Then
$f(0)>\epsilon$, and $f(1)<\epsilon$. There exists $\theta_0$,
$0<\theta_0<1$, such that $f\sbra{\theta_0}=\epsilon$ since
$f\sbra{\theta}$ is continuous on the interval $\mbra{0,1}$. Hence,
$\m{x}'=\theta_0\m{x}^*$ is a feasible solution to BPDN in
(\ref{formu:BPDN}). We conclude that $\m{x}^*$ is not optimal by
$\onen{\m{x}'}=\theta_0\onen{\m{x}^*}<\onen{\m{x}^*}$, which leads
to contradiction.
\end{proof}

Lemma \ref{Lemma:BPDN_optimalx} studies the location of an optimal
solution to the BPDN problem. It states that the optimal solution
locates at the origin if the origin is a feasible solution, or at
the boundary of the feasible domain otherwise. This can be easily
observed from Fig. \ref{Fig:illu_solution_BPDN}. Based on Lemmas
\ref{Lemma:continuousnullspace} and \ref{Lemma:BPDN_optimalx}, we
have the following results for AA-P-BPDN.

\begin{thm} Any accumulation point $\sbra{\m{x}^*,\m{\beta}^*}$ of the sequence $\lbra{\sbra{\m{x}^{\sbra{j}},\m{\beta}^{\sbra{j}}}}_{j=1}^{\infty}$ is a stationary point of AA-P-BPDN in the sense that
{\lentwo\equa{\m{x}^*
&=& \arg\min_{\m{x}}\onen{\m{x}}, \st \notag\\ &&\twon{\m{y}-\sbra{\m{A}+\m{B}\m{\Delta}^*}\m{x}}\leq\epsilon, \label{formu:condition_x} \\ \m{\beta}^*
&=& \arg\min_{\m{\beta}\in\mbra{-r,r}^n} \twon{\m{y}-\sbra{\m{A}+\m{B}\m{\Delta}}\m{x}^*}\label{formu:condition_beta}}
}with $\m{\Delta}^*=\diag\sbra{\m{\beta}^*}$. \label{thm:AA-P-BPDN_accumulisstation}
\end{thm}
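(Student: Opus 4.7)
The plan is to work along a subsequence $\{(\m{x}^{(j_l)}, \m{\beta}^{(j_l)})\}_l$ converging to $(\m{x}^*,\m{\beta}^*)$ and pass to the limit in the two optimality conditions defining AA-P-BPDN. A crucial preliminary observation is that $\onen{\m{x}^{(j)}}$ is non-increasing: since $\m{\beta}^{(j)}$ minimizes $\twon{\m{y}-(\m{A}+\m{B}\m{\Delta})\m{x}^{(j)}}$ over $[-r,r]^n$, one has
\[\twon{\m{y}-(\m{A}+\m{B}\m{\Delta}^{(j)})\m{x}^{(j)}} \leq \twon{\m{y}-(\m{A}+\m{B}\m{\Delta}^{(j-1)})\m{x}^{(j)}} \leq \epsilon,\]
so $\m{x}^{(j)}$ is feasible in the next BPDN subproblem and $\onen{\m{x}^{(j+1)}} \leq \onen{\m{x}^{(j)}}$. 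Hence $\onen{\m{x}^{(j)}}$ converges monotonically to some $L$, and by continuity $\onen{\m{x}^*}=L$. Evaluating the displayed inequality at index $j_l$ and letting $l\to\infty$ also gives $\twon{\m{y}-(\m{A}+\m{B}\m{\Delta}^*)\m{x}^*}\leq \epsilon$, so $\m{x}^*$ is feasible in the limiting BPDN subproblem.

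Condition (\ref{formu:condition_beta}) is then immediate from continuity: for every $\m{\beta}\in[-r,r]^n$, the optimality of $\m{\beta}^{(j_l)}$ for $\m{x}^{(j_l)}$ yields $\twon{\m{y}-(\m{A}+\m{B}\m{\Delta}^{(j_l)})\m{x}^{(j_l)}} \leq \twon{\m{y}-(\m{A}+\m{B}\m{\Delta})\m{x}^{(j_l)}}$, and passing to the limit transfers this inequality to $(\m{x}^*,\m{\beta}^*)$.

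Condition (\ref{formu:condition_x}) is the delicate part, because $\m{x}^{(j_l)}$ was built from $\m{\beta}^{(j_l-1)}$, which need not converge to $\m{\beta}^*$. The remedy I would use is to look one step ahead at $\m{x}^{(j_l+1)}$, whose BPDN subproblem is defined by $\m{\beta}^{(j_l)}\to\m{\beta}^*$, so the corresponding sensing matrices $\m{A}+\m{B}\m{\Delta}^{(j_l)}$ do converge to $\m{A}+\m{B}\m{\Delta}^*$. For an arbitrary $\m{v}$ feasible in the limiting subproblem, Lemma \ref{Lemma:continuousnullspace} supplies $\m{v}^{(l)}$ feasible at iteration $j_l+1$ with $\m{v}^{(l)}\to\m{v}$; the optimality of $\m{x}^{(j_l+1)}$ then yields $\onen{\m{x}^{(j_l+1)}}\leq\onen{\m{v}^{(l)}}$, and passing to the limit, using $\onen{\m{x}^{(j_l+1)}}\to L=\onen{\m{x}^*}$ from the monotone convergence above together with continuity of $\onen{\cdot}$, gives $\onen{\m{x}^*}\leq\onen{\m{v}}$. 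Combined with the feasibility of $\m{x}^*$, this is exactly (\ref{formu:condition_x}). The main obstacle is precisely this index-mismatch issue between the pair $(\m{x}^{(j_l)},\m{\beta}^{(j_l)})$ and the subproblem indexed by $\m{\beta}^*$, and Lemma \ref{Lemma:continuousnullspace} is the indispensable tool that makes the one-step look-ahead go through.
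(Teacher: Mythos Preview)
Your proposal is correct and follows essentially the same approach as the paper: the same monotonicity argument for $\onen{\m{x}^{(j)}}$, the same direct limit for (\ref{formu:condition_beta}), and the same one-step look-ahead to $\m{x}^{(j_l+1)}$ combined with Lemma~\ref{Lemma:continuousnullspace} for (\ref{formu:condition_x}). The only cosmetic addition is your explicit note that $\m{x}^*$ is feasible in the limiting subproblem, which the paper leaves implicit.
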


\begin{thm} An optimal solution $\sbra{\m{x}^*,\m{\beta}^*}$ to P-BPDN in (\ref{formu:l1_offgrid_Phiv}) is a stationary point of AA-P-BPDN. \label{thm:AA-P-BPDN_optimalisstation}
\end{thm}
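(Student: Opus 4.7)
The plan is to argue by contradiction, treating the two stationarity conditions (\ref{formu:condition_x}) and (\ref{formu:condition_beta}) separately, and in each case producing, from an alleged violation, a feasible point of P-BPDN with strictly smaller $\ell_1$ objective than $(\m{x}^*,\m{\beta}^*)$.

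First, dispose of the trivial case $\twon{\m{y}}\leq\epsilon$: then $(\m{0},\m{\beta})$ is feasible for every $\m{\beta}\in[-r,r]^n$, optimality forces $\m{x}^*=\m{0}$, and both stationarity conditions become automatic. Indeed (\ref{formu:condition_x}) holds because $\m{0}$ is feasible for the inner BPDN defined by $\m{\Delta}^*$, and (\ref{formu:condition_beta}) holds because the residual $\twon{\m{y}-\sbra{\m{A}+\m{B}\m{\Delta}}\m{0}}=\twon{\m{y}}$ does not depend on $\m{\beta}$. Assume from now on that $\twon{\m{y}}>\epsilon$, so necessarily $\m{x}^*\neq\m{0}$.

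For (\ref{formu:condition_x}), suppose there exists $\m{x}'$ with $\twon{\m{y}-\sbra{\m{A}+\m{B}\m{\Delta}^*}\m{x}'}\leq\epsilon$ and $\onen{\m{x}'}<\onen{\m{x}^*}$. Then $(\m{x}',\m{\beta}^*)$ lies in the feasible domain of P-BPDN and has strictly smaller objective than $(\m{x}^*,\m{\beta}^*)$, contradicting the assumed optimality. So (\ref{formu:condition_x}) holds.

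For (\ref{formu:condition_beta}), suppose $\m{\beta}^*$ is not a minimizer over the compact box $[-r,r]^n$ of the (continuous) map $\m{\beta}\mapsto\twon{\m{y}-\sbra{\m{A}+\m{B}\diag(\m{\beta})}\m{x}^*}$, and let $\m{\beta}'\in[-r,r]^n$ attain a strictly smaller residual. Together with feasibility of $(\m{x}^*,\m{\beta}^*)$ this yields $\twon{\m{y}-\sbra{\m{A}+\m{B}\m{\Delta}'}\m{x}^*}<\twon{\m{y}-\sbra{\m{A}+\m{B}\m{\Delta}^*}\m{x}^*}\leq\epsilon$. Now mimic the scaling argument in the proof of Lemma \ref{Lemma:BPDN_optimalx}: define $f(\theta)=\twon{\m{y}-\sbra{\m{A}+\m{B}\m{\Delta}'}\theta\m{x}^*}$, which is continuous in $\theta$, with $f(1)<\epsilon$ and $f(0)=\twon{\m{y}}>\epsilon$. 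By the intermediate value theorem there is $\theta_0\in(0,1)$ with $f(\theta_0)=\epsilon$, so $(\theta_0\m{x}^*,\m{\beta}')$ is feasible for P-BPDN with $\onen{\theta_0\m{x}^*}=\theta_0\onen{\m{x}^*}<\onen{\m{x}^*}$, again contradicting the optimality of $(\m{x}^*,\m{\beta}^*)$. Therefore (\ref{formu:condition_beta}) also holds and $(\m{x}^*,\m{\beta}^*)$ is a stationary point.

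The main point to get right is the second condition: a violation of (\ref{formu:condition_beta}) produces strict slack in the residual constraint, which by the Lemma \ref{Lemma:BPDN_optimalx}-style shrinkage of $\m{x}^*$ can be converted into a strictly smaller $\ell_1$ objective; the case $\twon{\m{y}}\leq\epsilon$ must be split off because the shrinkage argument relies on $f(0)>\epsilon$.
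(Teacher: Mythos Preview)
Your proof is correct and follows essentially the same approach as the paper's Appendix~E: the same case split on $\twon{\m{y}}\leq\epsilon$ versus $\twon{\m{y}}>\epsilon$, the same immediate argument for (\ref{formu:condition_x}), and the same contradiction for (\ref{formu:condition_beta}) via strict slack in the residual. The only cosmetic difference is that the paper invokes Lemma~\ref{Lemma:BPDN_optimalx} to conclude $\m{x}^*$ is non-optimal for the BPDN with $\m{\Delta}'$ and then takes its minimizer, whereas you inline that lemma's scaling argument to construct $\theta_0\m{x}^*$ directly.
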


Theorem \ref{thm:AA-P-BPDN_accumulisstation} studies the
property of the solution $\sbra{\m{x}^{\sbra{j}},\m{\beta}^{\sbra{j}}}$
produced by AA-P-BPDN. It shows that
$\sbra{\m{x}^{\sbra{j}},\m{\beta}^{\sbra{j}}}$ is arbitrarily close to a
stationary point of AA-P-BPDN as the iteration index $j$ is large
enough.\footnote{It is shown in the proof of Theorem
\ref{thm:AA-P-BPDN_accumulisstation} in Appendix D that the
sequence
$\lbra{\sbra{\m{x}^{\sbra{j}},\m{\beta}^{\sbra{j}}}}_{j=1}^{\infty}$ is
bounded. And it can be shown, for example, using contradiction, that
for a bounded sequence $\lbra{a_j}_{j=1}^{\infty}$, there exists an
accumulation point of $\lbra{a_j}_{j=1}^{\infty}$ such that $a_j$ is
arbitrarily close to it as $j$ is large enough.} Hence, the output
of AA-P-BPDN can be considered as a stationary point provided that
an appropriate termination criterion is set. Theorem
\ref{thm:AA-P-BPDN_optimalisstation} tells that an optimal
solution to P-BPDN is a stationary point of AA-P-BPDN. So, it is
possible for AA-P-BPDN to produce an optimal solution to P-BPDN. The
proofs of Theorems \ref{thm:AA-P-BPDN_accumulisstation} and
\ref{thm:AA-P-BPDN_optimalisstation} are provided in Appendix D and
Appendix E respectively.

\begin{rem} During the revision of this paper, we have noted the following formulation of P-BPDN that possibly can provide an efficient approach to an optimal solution to P-BPDN. Let $\m{x}=\m{x}_+-\m{x}_-$ where $\m{x}_+\succcurlyeq\m{0}$, $\m{x}_-\succcurlyeq\m{0}$ and $\m{x}_+\odot\m{x}_-=\m{0}$. Then we have $\abs{\m{x}}=\m{x}_++\m{x}_-$ where $\abs{\cdot}$ applies elementwise. Denote $\m{p}=\m{\beta}\odot\m{x}$. A convex problem can be cast as follows:
\equ{\begin{split}
&\min_{\m{x}_+,\m{x}_-,\m{p}} \m{1}^T\sbra{\m{x}_++\m{x}_-}, \\
&\st \left\{\begin{array}{l} \twon{\m{y}-\begin{bmatrix}\m{A}&-\m{A}& \m{B}\end{bmatrix} \begin{bmatrix}\m{x}_+\\ \m{x}_-\\ \m{p}\end{bmatrix}} \leq\epsilon, \\ \m{x}_+\succcurlyeq\m{0},\\\m{x}_-\succcurlyeq\m{0},\\ r\sbra{\m{x}_++\m{x}_-}\succcurlyeq\m{p}\succcurlyeq-r\sbra{\m{x}_++\m{x}_-}. \end{array}\right.\end{split} \label{formu:convexityofPBPDN}}
\end{rem}
The above convex problem can be considered as a convex relaxation of P-BPDN since it can be shown (like that in Theorem \ref{thm:equivalenceP1P2}) that an optimal solution to P-BPDN can be obtained based on an optimal solution to the problem in (\ref{formu:convexityofPBPDN}) incorporated with an additional nonconvex constraint $\m{x}_+\odot\m{x}_-=\m{0}$. An interesting phenomenon has been observed through numerical simulations that an optimal solution to the problem in (\ref{formu:convexityofPBPDN}) still satisfies the constraint $\m{x}_+\odot\m{x}_-=\m{0}$. Based on such an observation, an efficient approach to P-BPDN is to firstly solve (\ref{formu:convexityofPBPDN}), and then check whether its solution, denoted by $\sbra{\m{x}_+^*,\m{x}_-^*,\m{p}^*}$, satisfies $\m{x}_+\odot\m{x}_-=\m{0}$. If it does, then $\sbra{\m{x}^*,\m{\beta}^*}$ is an optimal solution to P-BPDN where $\m{x}^*=\m{x}_+^*-\m{x}_-^*$ and $\beta^*_j=\left\{\begin{array}{l}p_j^*/x_j^*,\\0,\end{array} \begin{array}{l}\text{ if }x_j^*\neq0;\\\text{ otherwise.}\end{array}\right.$ Otherwise, we may turn to AA-P-BPDN again. But we note that it is still an open problem whether an optimal solution to (\ref{formu:convexityofPBPDN}) always satisfies the constraint $\m{x}_+\odot\m{x}_-=\m{0}$. In addition, the convex relaxation in (\ref{formu:convexityofPBPDN}) does not apply to the complex signal case as in DOA estimation studied in Subsection \ref{sec:DOAapplication}.

\subsection{Effectiveness of AA-P-BPDN}\label{sec:effectiv_AA-P-BPDN}

As reported in the last subsection, it is possible for AA-P-BPDN to
produce an optimal solution to P-BPDN. But it is not easy to check
the optimality of the output of AA-P-BPDN because of the
nonconvexity of P-BPDN. Instead, we study the effectiveness of
AA-P-BPDN in solving P-BPDN in this subsection with the concept of
effectiveness as defined in Subsection
\ref{sec:Concept_Effectiveness}. By Corollary
\ref{cor:optimalnotnecessary_offgrid}, a good signal recovery
$\widehat{\m{x}}$ of $\m{x}^o$ is not necessarily an optimal solution.
It requires only that $\sbra{\widehat{\m{x}},\widehat{\m{\beta}}}$, where
$\widehat{\m{\beta}}$ denotes the recovery of $\m{\beta}^o$, be a feasible
solution to P-BPDN and that $\onen{\widehat{\m{x}}}\leq\onen{\m{x}^o}$
holds. As shown in the proof of Theorem
\ref{thm:AA-P-BPDN_accumulisstation} in Appendix D, that
$\sbra{\m{x}^{\sbra{j}},\m{\beta}^{\sbra{j}}}$ for any $j\geq1$ is a
feasible solution to P-BPDN and that the sequence
$\lbra{\onen{\m{x}^{\sbra{j}}}}_{j=1}^{\infty}$ is monotone
decreasing and converges. So, the effectiveness of AA-P-BPDN in
solving P-BPDN can be assessed via numerical simulations by checking
whether $\onen{\m{x}^{AA}}\leq\onen{\m{x}^o}$ holds with
$\m{x}^{AA}$ denoting the output of AA-P-BPDN. The effectiveness of
AA-P-BPDN is verified in Subsection \ref{sec:academicsimulation} via numerical simulations, where we observe
that the inequality $\onen{\m{x}^{AA}}\leq\onen{\m{x}^o}$ holds in
all experiments (over $3700$ trials).

\section{Numerical Simulations}\label{sec:simulation}

\subsection{Verification of the Robust Stability}\label{sec:academicsimulation}
This subsection demonstrates the robustly stable signal recovery results of
SP-CS in the present paper, as well as the effectiveness of AA-P-BPDN in solving P-BPDN in (\ref{formu:l1_offgrid_Phiv}), via numerical simulations. AA-P-BPDN is implemented in
Matlab with problems in (\ref{formu:solve_xj}) and
(\ref{formu:solve_betaj}) being solved using CVX
\cite{grant2008cvx}. AA-P-BPDN is terminated as
$\frac{\abs{\onen{\m{x}^{\sbra{j}}}-\onen{\m{x}^{\sbra{j-1}}}}}
{\onen{\m{x}^{\sbra{j-1}}}}\leq1\times10^{-6}$ or the maximum number
of iterations, set to $200$, is reached. PP-BPDN is also implemented in Matlab and solved by CVX.

We first consider general signals. The
sparse signal case is mainly studied. The variation of the signal recovery error is studied with respect to the noise level, perturbation level and number of measurements respectively. Besides AA-P-BPDN for P-BPDN in SP-CS, performances of three other approaches are also studied. The first one assumes that the perturbation is known {\em a priori} and recovers the original signal $\m{x}^o$ by solving, namely, the oracle (O-) BPDN problem
\equ{\min_{\m{x}}\onen{\m{x}},\st\twon{\m{y}-\sbra{\m{A}+\m{B}\m{\Delta}^o}\m{x}}\leq\epsilon.\notag}
The O-BPDN approach produces the best recovery result of SP-CS within the scope of $\ell_1$ minimization of CS since it exploits the exact perturbation (oracle information). The second one corresponds to the robust signal recovery of perturbed CS as described in Subsection \ref{sec:robustrecovery} and solves N-BPDN in (\ref{formu:N-BPDN}) where $\epsilon_{\m{E},\m{x}^o}=\twon{\m{B}\m{\Delta}^o\m{x}^o}$ is used though it is not available in practice. The last one refers to the other approach to SP-CS that seeks for the signal recovery by solving TPS-BPDN in (\ref{formu:TPS-BPDN}) as discussed in Subsection \ref{sec:Concept_Effectiveness}.

The first experiment studies the signal recovery error with respect to the noise level. We set the signal length $n=200$, sample size $m=80$, sparsity level $k=10$ and perturbation parameter $r=0.1$. The noise level $\epsilon$ varies from $0.05$ to $2$ with interval $0.05$. For each combination of $\sbra{n,m,k,r,\epsilon}$, the signal recovery error, as well as $\m{\beta}^o$ recovery error (on the support of $\m{x}^o$), is averaged over $R=50$ trials. In each trial, matrices $\m{A}$ and $\m{B}$ are generated from Gaussian distribution and each column of them has zero mean and unit norm after proper scaling. The sparse signal $\m{x}^o$ is composed of unit spikes with random signs and locations. Entries of $\m{\beta}^o$ are uniformly distributed in $\mbra{-r,r}$. The noise $\m{e}$ is zero mean Gaussian distributed and then scaled such that $\twon{\m{e}}=\epsilon$. Using the same data, the four approaches, including O-BPDN, N-BPDN, TPS-BPDN and AA-P-BPDN for P-BPDN, are used to recover $\m{x}^o$ respectively in each trial. The simulation results are shown in Fig. \ref{Fig:Error_Noise}. It can be seen that both signal and $\m{\beta}^o$ recovery errors of AA-P-BPDN for P-BPDN in SP-CS are proportional to the noise, which is consistent with our robustly stable signal recovery result in the present paper. The error of N-BPDN grows linearly with the noise but a large error still exhibits in the noise free case. Except the ideal case of O-BPDN, our proposed P-BPDN has the smallest error.

\begin{figure}
  \centering
  \includegraphics[width=3.5in]{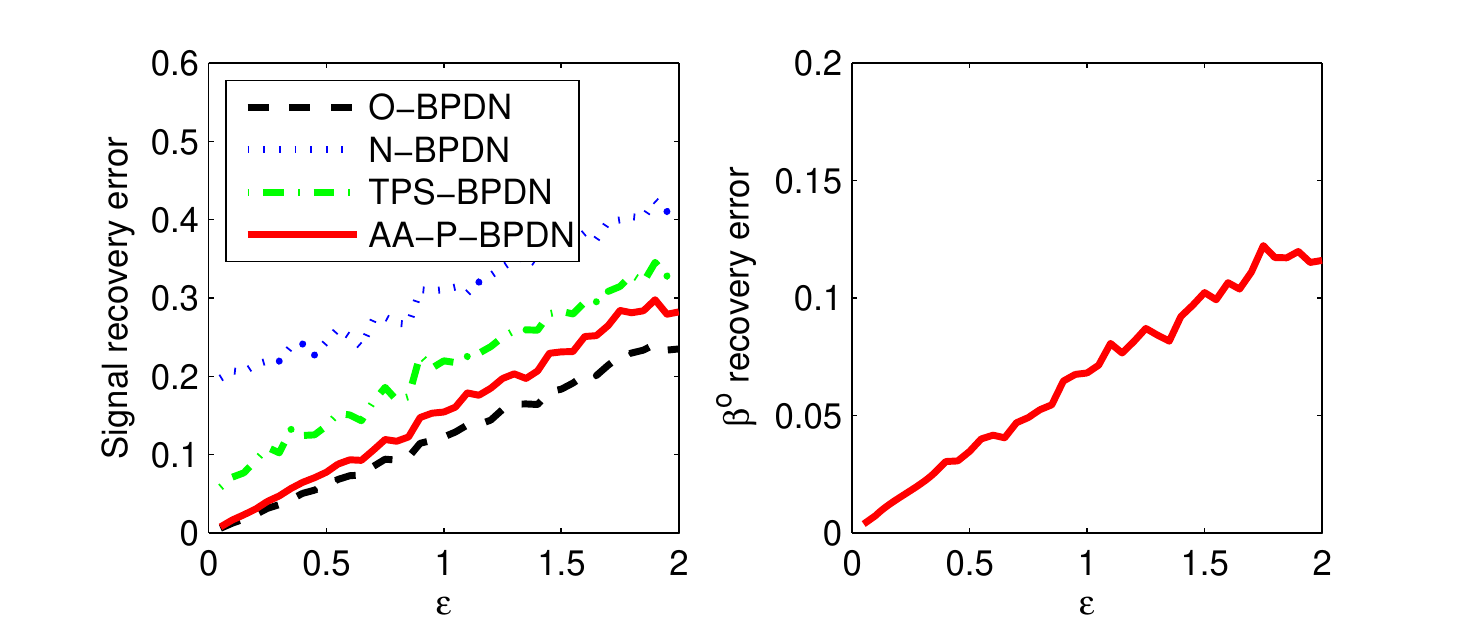}
  \caption{Signal and perturbation recovery errors with respect to the noise level $\epsilon$ with parameter settings $\sbra{n,m,k,r}=\sbra{200,80,10,0.1}$. Both signal and $\m{\beta}^o$ recovery errors of AA-P-BPDN for P-BPDN in SP-CS are proportional to $\epsilon$.}
  \label{Fig:Error_Noise}
\end{figure}

The second experiment studies the effect of the structured perturbation. Experiment settings are the same as those in the first experiment except that we set $\sbra{n,m,k,\epsilon}=\sbra{200,80,10,0.5}$ and vary $r\in\lbra{0.05,0.1,\cdots,1}$. Fig. \ref{Fig:Error_Perturbation} presents our simulation results. A nearly constant error is obtained using O-BPDN in standard CS since the perturbation is assumed to be known in O-BPDN. The error of AA-P-BPDN for P-BPDN in SP-CS slowly increases with the perturbation level and is quite close to that of O-BPDN for a moderate perturbation. Such a behavior is consistent with our analysis. Besides, it can be observed that the error of N-BPDN grows linearly with the perturbation level. Again, our proposed P-BPDN has the smallest error except O-BPDN.

\begin{figure}
  \centering
  \includegraphics[width=3.5in]{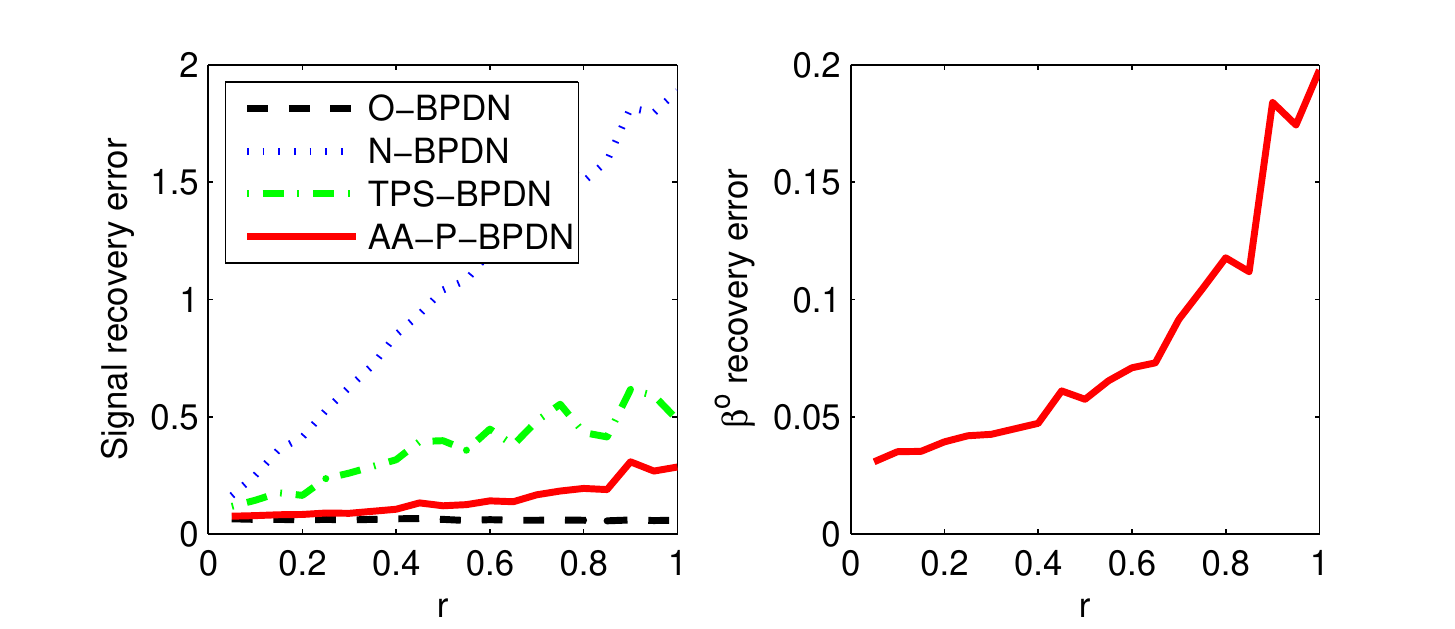}
  \caption{Signal and perturbation recovery errors with respect to the perturbation level in terms of $r$ with parameter settings $\sbra{n,m,k,\epsilon}=\sbra{200,80,10,0.5}$. The error of AA-P-BPDN for P-BPDN in SP-CS slowly increases with the perturbation level and is quite close to that of the ideal case of O-BPDN for a moderate perturbation.}
  \label{Fig:Error_Perturbation}
\end{figure}

The third experiment studies the variation of the recovery error with the number of measurements. We set $\sbra{n,k,r,\epsilon}=\sbra{200,10,0.1,0.2}$ and vary $m\in\lbra{30,35,\cdots,100}$. Simulation results are presented in Fig. \ref{Fig:Error_MeasureNumber}. Signal recovery errors of all four approaches decrease as the number of measurements increases. Again, it is observed that O-BPDN of the ideal case achieves the best result followed by our proposed P-BPDN. For example, to obtain the signal recovery error of $0.05$, about $55$ measurements are needed for O-BPDN while the numbers are, respectively, $65$ for AA-P-BPDN and $95$ for TPS-BPDN. It is impossible for N-BPDN to achieve such a small error in our observation because of the existence of the perturbation.

\begin{figure}
  \centering
  \includegraphics[width=3.5in]{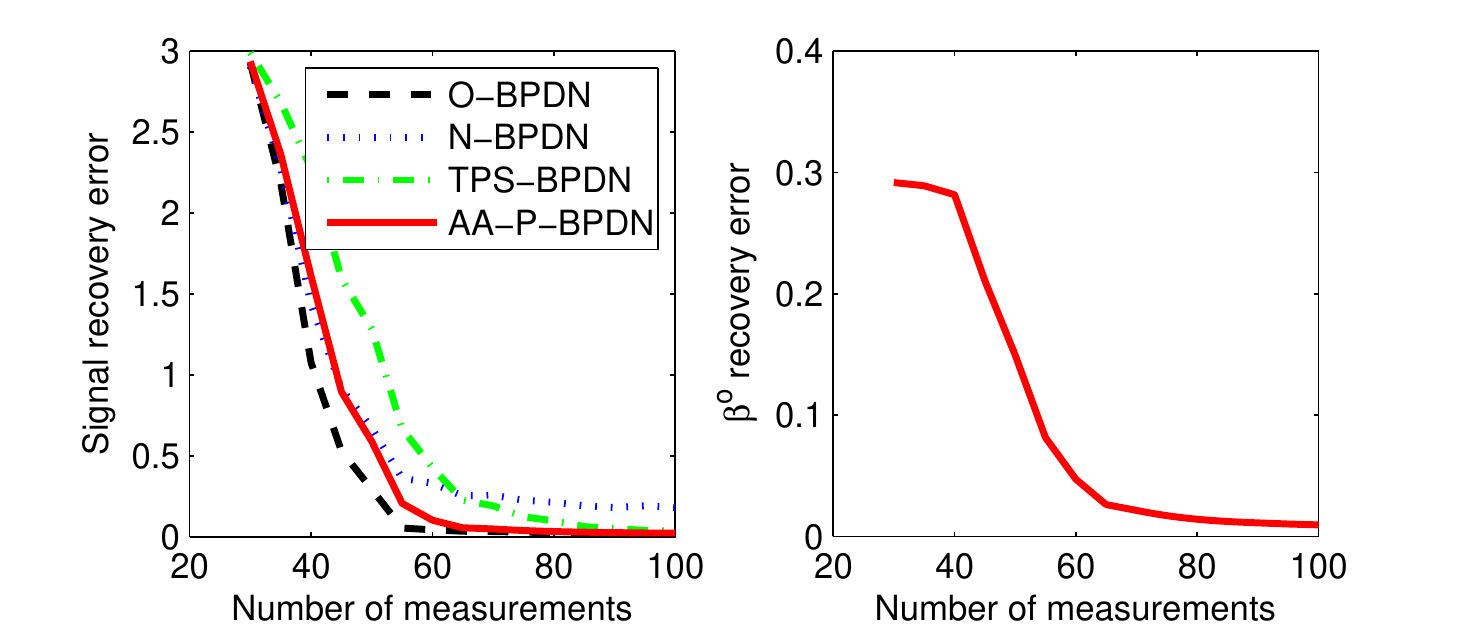}
  \caption{Signal and perturbation recovery errors with respect to the number of measurements with parameter settings $\sbra{n,k,r,\epsilon}=\sbra{200,10,0.1,0.2}$. AA-P-BPDN for P-BPDN in SP-CS has the best performance except the ideal case of O-BPDN.}
  \label{Fig:Error_MeasureNumber}
\end{figure}

We next consider a compressible signal that is generated by taking a fixed sequence $\lbra{2.8843\cdot j^{-1.5}}_{j=1}^n$ with $n=200$, randomly permuting it, and multiplying by a random sign sequence (the coefficient $2.8843$ is chosen such that the compressible signal has the same $\ell_2$ norm as the sparse signals in the previous experiments). It is sought to be recovered from $m=70$ noisy measurements with $\epsilon=0.2$ and $r=0.1$. Give experiment results in one instance as an example. The signal recovery error of AA-P-BPDN for P-BPDN in SP-CS is about $0.239$, while errors of O-BPDN, N-BPDN and TPS-BPDN are about $0.234$, $0.361$ and $0.314$ respectively.


For the special positive signal case, an optimal solution to PP-BPDN can be efficiently obtained. An experiment result is shown in Fig. \ref{Fig:PositiveSparseSignal}, where a sparse signal of length $n=200$, composed of $k=10$ positive unit spikes, is exactly recovered from $m=50$ noise free measurements with $r=0.1$ by solving $\sbra{P_1}$.

\begin{figure}
  \centering
  \includegraphics[width=3.5in]{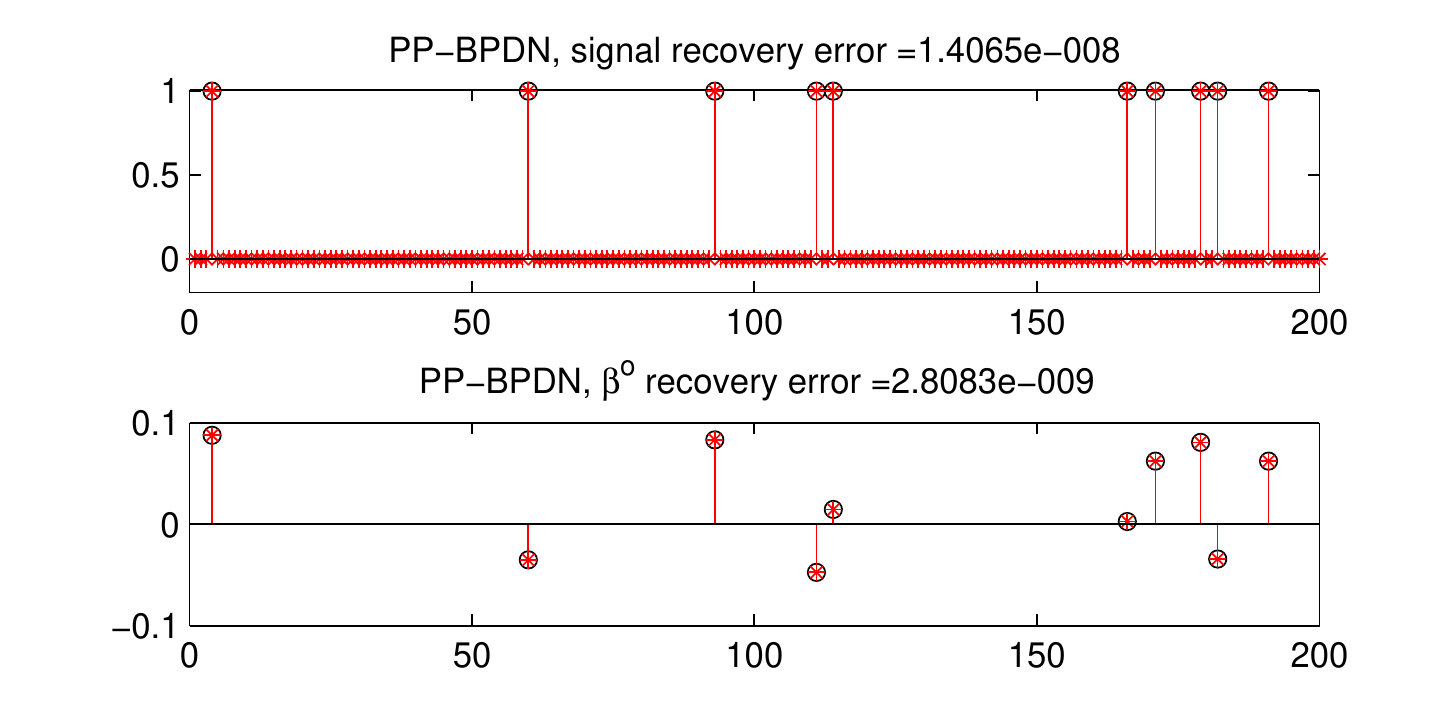}
  \caption{Exact recovery of a positive sparse signal from noise-free measurements with $\sbra{m,n,k,r,\epsilon}=\sbra{200,50,10,0.1,0}$. PP-BPDN is solved by solving $\sbra{P_1}$. $\m{\beta}^o$ and its recovery are shown only on the support of $\m{x}^o$. Black circles: original signal and $\m{\beta}^o$; red stars: recoveries.}
  \label{Fig:PositiveSparseSignal}
\end{figure}

\subsection{Empirical Results of DOA Estimation} \label{sec:application}
This subsection studies the empirical performance of the application of the studied SP-CS framework in DOA estimation. We consider the case of $n=90$ and $k=2$. Numerical calculations show that the D-RIP condition $\bar{\delta}_{4k}\sbra{\m{\Psi}}<\sbra{\sqrt{2\sbra{1+r^2}}+1}^{-1}$ in Theorem \ref{Thm:BPDN_offgrid} is satisfied if $m\geq145$. Though it ceases to be a ``compressed'' sensing problem in the case $m\geq n$, it still makes sense in SP-CS since there are $2n$ variables to be estimated and hence the P-BPDN problem is still underdetermined as $m < 2n$. As noted in Subsection \ref{sec:discussion}, the D-RIP condition can be possibly relaxed using recent techniques in standard CS, which may reduce the required $m$ value. In addition, a RIP condition is a sufficient condition for guaranteed signal recovery accuracy while its conservativeness in standard CS has been studied in \cite{blanchard2011compressed}. We next choose a much smaller $m=30$ ($r\approx0.302$ in such a case) and show the empirical performance of the proposed SP-CS framework on such off-grid DOA estimation.


The experimental setup is as follows. In each trial, the complex source signal $\m{s}$ is generated with both entries having unit amplitude and random phases. $\theta_1$ and $\theta_2$ are generated uniformly from intervals $\mbra{\frac{2}{n},\frac{4}{n}}$ and $\mbra{\frac{12}{n},\frac{14}{n}}$ respectively ($5.1^\circ\sim7.7^\circ$ apart in the DOA domain). P-BPDN is solved using AA-P-BPDN whose settings are the same as those in Subsection \ref{sec:academicsimulation}. Our experimental results of the estimation error $\widehat{\m{\theta}}-\m{\theta}$ for both sources are presented in Fig. \ref{Fig:DOAErrorDistribution} where 1000 trials are used. It can be seen that P-BPDN performs well on the off-grid DOA estimation. All estimation errors lie in the interval $\mbra{-\frac{1}{n},\frac{1}{n}}$ with most very close to zero. To achieve a possibly comparable mean squared estimation error, a grid of length at least $n=360$ has to be used in standard CS based methods according to the lower bound mentioned in Subsection \ref{sec:DOAapplication}. An example of performance of SP-CS and standard CS on DOA estimation is shown in Fig. \ref{Fig:SPCSvsStdCS}, where the two approaches share the same data set and $n=360$ is set in standard CS. From the upper two sub-figures, it can be seen that SP-CS performs well on both source signal and $\m{\beta}^o$ recoveries. From the lower left one, however, it can be seen that two nonzero entries are presented in the recovered signal around the location of each source when using standard CS. Such a phenomenon is much clearer in the last sub-figure, where it can be observed that a single peak exhibits at a place very close to the true location of source 1 using the proposed SP-CS framework while two peaks occurs at places further away from the true source in standard CS.

\begin{figure}
  \centering
  \includegraphics[width=3.5in]{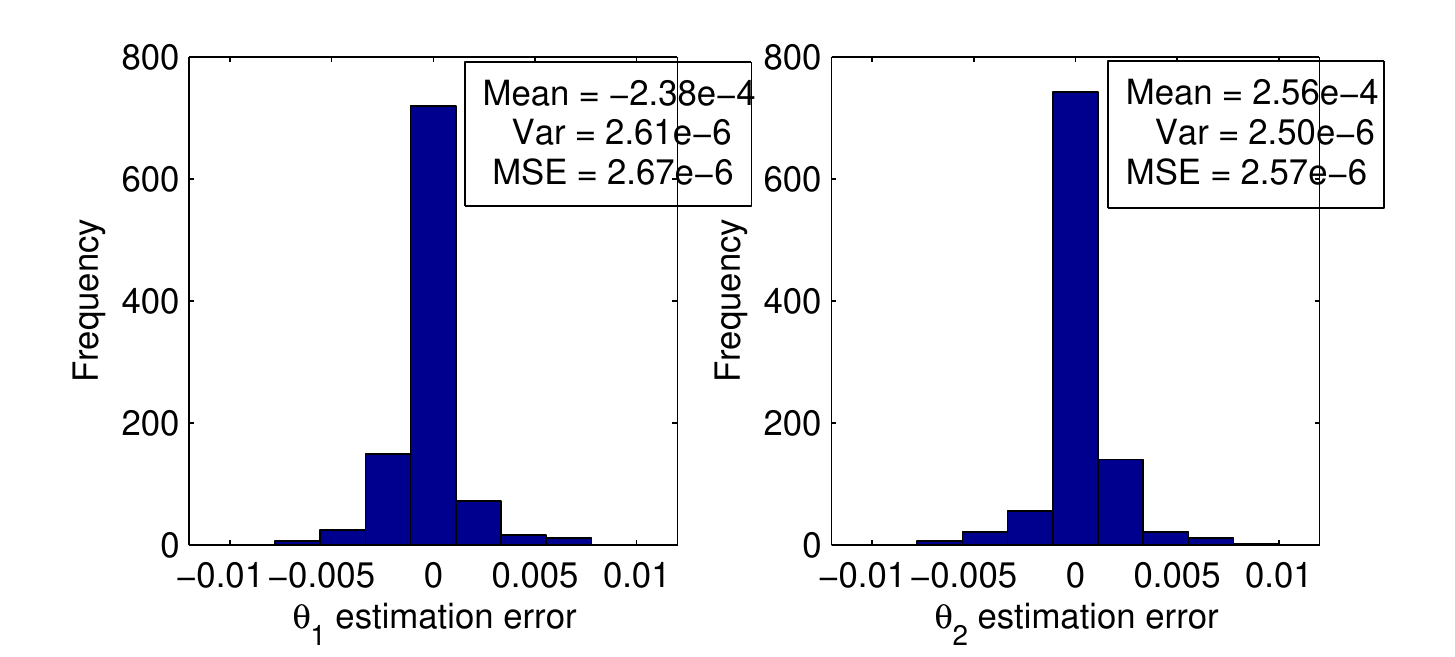}
  \caption{Histogram of $\m{\theta}$ estimation error for both sources using P-BPDN for SP-CS. Statistics including mean, variance and mean squared error (MSE) are shown.}
  \label{Fig:DOAErrorDistribution}
\end{figure}

\begin{figure}
  \centering
  \includegraphics[width=3.5in]{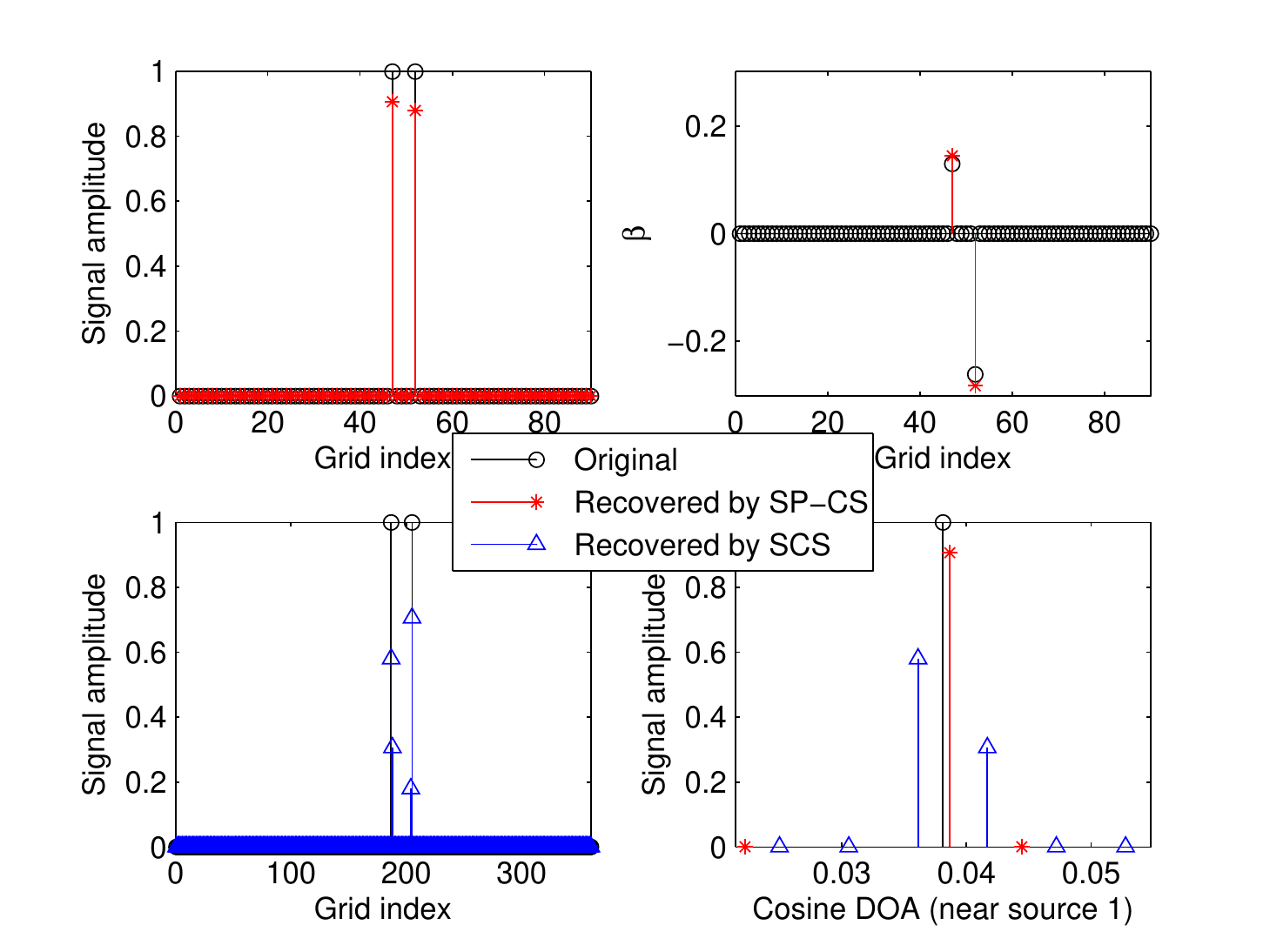}
  \caption{Performance comparison of SP-CS and standard CS (SCS) on DOA estimation. Upper left: signal recovery in SP-CS; upper right: $\m{\beta}^o$ recovery in SP-CS (shown only on the signal support); lower left: signal recovery in standard CS; lower right: signal amplitude versus $\theta$ (near the location of source 1) in SP-CS and standard CS.}
  \label{Fig:SPCSvsStdCS}
\end{figure}

%

\section{Conclusion}\label{sec:conclusion}
This paper studied the CS problem in the presence of measurement noise and a structured matrix perturbation. A concept named as robust stability for signal recovery was introduced. It was shown that the robust stability can be achieved for a sparse signal by solving an $\ell_1$ minimization problem P-BPDN under mild conditions. In the presence of measurement noise, the recovery error is at most proportional to the noise level and the recovery is exact in the special noise free case. A general result for compressible signals was also reported. An alternating algorithm named as AA-P-BPDN was proposed to solve the nonconvex P-BPDN problem, and numerical simulations were carried out, verifying our theoretical analysis. A practical application in DOA estimation was studied and satisfactory estimation results were obtained.

The simulation results of DOA estimation suggest that the RIP condition for the robust stability is quite conservative in practice. One future work is to relax such a condition. In our problem formulation, the signal $\m{x}^o$ and $\m{\beta}^o$ that determines the matrix perturbation are jointly sparse. While this paper focuses on extracting the information that $\m{x}^o$ is sparse and that each entry of $\m{\beta}^o$ lies in a bounded interval, such joint sparsity is not exploited. Inspired by the recent works on block and structured sparsity, e.g., \cite{eldar2010block,bach2011structured}, one future direction is to take into account the joint sparsity information in the signal recovery process to obtain possibly improved recovery performance. Our studied perturbed CS problem is related to the area of dictionary learning for sparse representation \cite{aharon2006ksvd}, where there is typically no {\em a priori} known structure in the overcomplete dictionary and a large number of observation vectors are important to make the learning process succeed. The studied problem in this paper can be considered as a dictionary learning problem but with a known structure in the dictionary, which leads to some similarity between our optimization approach and algorithms for dictionary learning, e.g., $K$-SVD \cite{aharon2006ksvd} and MOD \cite{engan2000multi}. Due to the known structure, it has been shown in this paper that a single observation vector is enough to learn the dictionary with guaranteed performance. Further relations deserve future studies.


\section*{Appendix A\\Proof of Theorem \ref{Thm:l0_offgrid}}
Denote $\m{z}=\begin{bmatrix}\m{x}\\\m{\beta}\odot\m{x}\end{bmatrix}$ and similarly define $\m{z}^o$ and $\m{z}^*$. Then the problem in (\ref{formu:l0_offgrid_Phiv}) can be rewritten into
\equ{\min_{\m{x}\in\bR^n,\m{\beta}\in\mbra{-r,r}^n}\zeron{\m{x}}, \st \m{y}=\m{\Psi}\m{z}. \label{formu:l0_offgrid_Psiz}}
Let $\bar\delta_{k}=\bar\delta_{k}\sbra{\m{\Psi}}$ hereafter for brevity.

First note that $\m{x}^o$ is $k$-sparse and $\m{z}^o$ is $2k$-D-sparse. Since $\sbra{\m{x}^*,\m{\beta}^*}$ is a solution to the problem in (\ref{formu:l0_offgrid_Psiz}), we have $\zeron{\m{x}^*}\leq\zeron{\m{x}^o}\leq k$ and, hence, $\m{z}^*$ is $2k$-D-sparse. By $\m{y}=\m{\Psi}\m{z}^o=\m{\Psi}\m{z}^*$ we obtain $\m{\Psi}\sbra{\m{z}^o-\m{z}^*}=\m{0}$ and thus $\m{z}^o-\m{z}^*=\m{0}$ by $\bar{\delta}_{4k}<1$ and the fact that $\m{z}^o-\m{z}^*$ is $4k$-D-sparse. We complete the proof by observing that $\m{z}^o-\m{z}^*=\begin{bmatrix} \m{x}^o-\m{x}^*\\\m{\beta}^o\odot\m{x}^o-\m{\beta}^*\odot\m{x}^*\end{bmatrix}=\m{0}$.

\section*{Appendix B\\Proofs of Theorems \ref{Thm:BPDN_offgrid} and \ref{Thm:BPDN_offgrid_compressible}}

We only present the proof of Theorem \ref{Thm:BPDN_offgrid_compressible} since Theorem \ref{Thm:BPDN_offgrid} is a special case of Theorem \ref{Thm:BPDN_offgrid_compressible}. We first show the following lemma.

\begin{lem} We have
\equ{\abs{\inp{\m{\Psi}\m{v},\m{\Psi}\m{v}'}}\leq\bar{\delta}_{2(k+k')}\twon{\m{v}}\twon{\m{v}'}\notag}
for all $2k$-D-sparse $\m{v}$ and $2k'$-D-sparse $\m{v}'$ supported on disjoint subsets.\label{Lamma:RIP_disjointsparse}
\end{lem}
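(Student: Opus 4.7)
The plan is to mimic the classical Cand\`es--Tao inner product lemma (the RIP implies near-orthogonality on disjoint supports) but in the D-sparse setting. By bilinearity of the inner product and homogeneity of the claimed bound, I may reduce to the normalized case $\twon{\m{v}}=\twon{\m{v}'}=1$.

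The key structural step is to check that the combinations $\m{v}\pm\m{v}'$ are themselves D-sparse with a controlled order. Writing $\m{v}=\mbra{\m{v}_1^T,\m{v}_2^T}^T$ and $\m{v}'=\mbra{\m{v}_1'^T,\m{v}_2'^T}^T$, the D-sparsity of $\m{v}$ means $\m{v}_1,\m{v}_2$ share a common support of size at most $k$, and similarly for $\m{v}'$ with $k'$. When the two common supports are disjoint, the pair $\m{v}_1\pm\m{v}_1'$ and $\m{v}_2\pm\m{v}_2'$ share a common support (the union of the two), of size at most $k+k'$. Hence $\m{v}\pm\m{v}'$ is $2(k+k')$-D-sparse, and additionally the disjointness yields $\twon{\m{v}\pm\m{v}'}^2=\twon{\m{v}}^2+\twon{\m{v}'}^2=2$.

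Now I would apply the $2(k+k')$-D-RIP of $\m{\Psi}$ to $\m{v}+\m{v}'$ and to $\m{v}-\m{v}'$ to get
\equ{\sbra{1-\bar{\delta}_{2(k+k')}}\cdot 2\leq\twon{\m{\Psi}\sbra{\m{v}\pm\m{v}'}}^2\leq\sbra{1+\bar{\delta}_{2(k+k')}}\cdot 2.\notag}
Invoking the polarization identity
\equ{\inp{\m{\Psi}\m{v},\m{\Psi}\m{v}'}=\tfrac{1}{4}\sbra{\twon{\m{\Psi}\sbra{\m{v}+\m{v}'}}^2-\twon{\m{\Psi}\sbra{\m{v}-\m{v}'}}^2},\notag}
the two RIP inequalities immediately yield $\abs{\inp{\m{\Psi}\m{v},\m{\Psi}\m{v}'}}\leq\bar{\delta}_{2(k+k')}$, which is the desired bound in the normalized case. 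Undoing the normalization by scaling $\m{v}$ and $\m{v}'$ returns the general inequality.

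I do not anticipate a real obstacle: the only subtlety is the verification that $\m{v}\pm\m{v}'$ remains D-sparse (as opposed to merely sparse) of the correct order, which is exactly where the disjointness hypothesis on the common supports is used. Everything else is the standard parallelogram-identity argument transplanted to the D-RIP.
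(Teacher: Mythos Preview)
Your proposal is correct and follows essentially the same route as the paper's proof: normalize, apply the D-RIP to $\m{v}\pm\m{v}'$, and use polarization. If anything, you are slightly more careful than the paper in explicitly verifying that $\m{v}\pm\m{v}'$ is $2(k+k')$-D-sparse rather than leaving this implicit.
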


\begin{proof} Without loss of generality, assume that $\m{v}$ and $\m{v}'$ are unit vectors with disjoint supports as above. Then by the definition of D-RIP and $\twon{\m{v}\pm\m{v}'}^2=\twon{\m{v}}^2 +\twon{\m{v}'}^2=2$ we have
\equ{2\sbra{1-\bar{\delta}_{2\sbra{k+k'}}}\leq\twon{\m{\Psi}\m{v}\pm\m{\Psi}\m{v}'}^2\leq 2\sbra{1+\bar{\delta}_{2\sbra{k+k'}}}.\notag}
And thus
\equ{\abs{\inp{\m{\Psi}\m{v},\m{\Psi}\m{v}'}}\leq\frac{1}{4}\abs{\twon{\m{\Psi}\m{v}+\m{\Psi}\m{v}'}^2 -\twon{\m{\Psi}\m{v}-\m{\Psi}\m{v}'}^2}\leq\bar\delta_{2\sbra{k+k'}},\notag}
which completes the proof.
\end{proof}

Using the notations $\m{z}$, $\m{z}^o$, $\m{z}^*$ and $\bar\delta_{k}$ in Appendix A, P-BPDN in (\ref{formu:l1_offgrid_Phiv}) can be rewritten into
\equ{\min_{\m{x}\in\bR^n,\m{\beta}\in\mbra{-r,r}^n}\onen{\m{x}}, \st \twon{\m{y}-\m{\Psi}\m{z}}\leq\epsilon. \label{formu:l1_offgrid_Psiz}}
Let $\m{h}=\m{x}^*-\m{x}^o$ and decompose $\m{h}$ into a sum of $k$-sparse vectors $\m{h}_{T_0},\m{h}_{T_1},\m{h}_{T_2},\cdots$, where $T_0$ denotes the set of indices of the $k$ largest entries (in absolute value) of $\m{x}^o$, $T_1$ the set of the $k$ largest entries of $\m{h}_{T_0^c}$ with $T_0^c$ being the complementary set of $T_0$, $T_2$ the set of the next $k$ largest entries of $\m{h}_{T_0^c}$ and so on. We abuse notations $\m{z}_{T_j}^*=\begin{bmatrix}\m{x}_{T_j}^*\\ \m{\beta}_{T_j}^*\odot\m{x}_{T_j}^*\end{bmatrix}$, $j=0,1,2,\cdots$, and similarly define $\m{z}_{T_j}^o$. Let $\m{f}=\m{z}^*-\m{z}^o$ and $\m{f}_{T_j}=\m{z}_{T_j}^*-\m{z}_{T_j}^o$ for $j=0,1,2,\cdots$. For brevity we write $T_{01}=T_0\cup T_1$. To bound $\twon{\m{h}}$, in the first step we show that $\twon{\m{h}_{T_{01}^c}}$ is essentially bounded by $\twon{\m{h}_{T_{01}}}$, and then in the second step we show that $\twon{\m{h}_{T_{01}}}$ is sufficiently small.

The first step follows from the proof of Theorem 1.3 in \cite{candes2008restricted}. Note that
\equ{\twon{\m{h}_{T_j}}\leq k^{1/2}\inftyn{\m{h}_{T_j}}\leq k^{-1/2}\onen{\m{h}_{T_{j-1}}}, \quad j\geq2,}
and thus
\equ{\begin{split}\twon{\m{h}_{T_{01}^c}}
&=\twon{\sum_{j\geq2}\m{h}_{T_j}} \leq\sum_{j\geq2}\twon{\m{h}_{T_j}}\\
&\leq k^{-1/2}\sum_{j\geq1}\onen{\m{h}_{T_j}}\leq k^{-1/2}\onen{\m{h}_{T_0^c}}.\end{split}\label{formu:hT01c}}
Since $\m{x}^*=\m{x}^o+\m{h}$ is an optimal solution, we have
\equ{\begin{split}\onen{\m{x}^o}
&\geq\onen{\m{x}^o+\m{h}}=\sum_{j\in T_0}\abs{x_j^o+h_j}+\sum_{j\in T_0^c}\abs{x_j^o+h_j}\\
&\geq\onen{\m{x}_{T_0}^o}-\onen{\m{h}_{T_0}}+\onen{\m{h}_{T_0^c}} -\onen{\m{x}_{T_0^c}^o}\end{split}}
and thus
\equ{\onen{\m{h}_{T_0^c}}\leq\onen{\m{h}_{T_0}}+2\onen{\m{x}_{T_0^c}^o}.\label{formu:hT0c}}
By (\ref{formu:hT01c}), (\ref{formu:hT0c}) and the inequality $\onen{\m{h}_{T_0}}\leq k^{1/2}\twon{\m{h}_{T_0}}$ we have
\equ{\twon{\m{h}_{T_{01}^c}}\leq\sum_{j\geq2}\twon{\m{h}_{T_j}}\leq\twon{\m{h}_{T_0}}+2k^{-1/2}e_0 \label{formu:formu4}}
with $e_0\equiv\onen{\m{x}^o-\m{x}^k}$.

In the second step, we bound $\twon{\m{h}_{T_{01}}}$ by utilizing its relationship with $\twon{\m{f}_{T_{01}}}$. Note that $\m{f}_{T_j}$ for each $j=0,1,\cdots$ is $2k$-D-sparse. By $\m{\Psi}\m{f}_{T_{01}}=\m{\Psi}\m{f}-\sum_{j\geq2}\m{\Psi}\m{f}_{T_j}$ we have
{\lentwo\equa{
&&\twon{\m{\Psi}\m{f}_{T_{01}}}^2= \inp{\m{\Psi}\m{f}_{T_{01}},\m{\Psi}\m{f}}-\sum_{j\geq2}\inp{\m{\Psi}\m{f}_{T_{01}},\m{\Psi}\m{f}_{T_j}}\notag\\
&\leq& \abs{\inp{\m{\Psi}\m{f}_{T_{01}},\m{\Psi}\m{f}}} + \sum_{j\geq2}\abs{\inp{\m{\Psi}\m{f}_{T_0},\m{\Psi}\m{f}_{T_j}}} \notag\\
&& +\sum_{j\geq2}\abs{\inp{\m{\Psi}\m{f}_{T_1},\m{\Psi}\m{f}_{T_j}}}\notag\\
&\leq& \twon{\m{\Psi}\m{f}_{T_{01}}}\cdot\twon{\m{\Psi}\m{f}} + \bar{\delta}_{4k}\twon{\m{f}_{T_0}} \sum_{j\geq2}\twon{\m{f}_{T_j}}\notag\\
&&+ \bar{\delta}_{4k}\twon{\m{f}_{T_1}} \sum_{j\geq2}\twon{\m{f}_{T_j}}\label{formu:formu1}\\
&\leq& \twon{\m{f}_{T_{01}}} \sbra{2\epsilon\sqrt{1+\bar{\delta}_{4k}} + \sqrt{2}\bar{\delta}_{4k} \sum_{j\geq2}\twon{\m{f}_{T_j}}}.\label{formu:formu2}}
}We used Lemma \ref{Lamma:RIP_disjointsparse} in (\ref{formu:formu1}). In (\ref{formu:formu2}), we used the D-RIP, and inequalities $\twon{\m{f}_{T_0}}+\twon{\m{f}_{T_1}}\leq\sqrt{2}\twon{\m{f}_{T_{01}}}$ and
\equ{\twon{\m{\Psi}\m{f}}=\twon{\m{\Psi}\sbra{\m{z}^*-\m{z}^o}}\leq\twon{\m{y}-\m{\Psi}\m{z}^*}+ \twon{\m{y}-\m{\Psi}\m{z}^o}\leq2\epsilon.\label{formu:Psifbound}}
By noting that $\m{\beta}^o,\m{\beta}^*\in\mbra{-r,r}^n$ and \equ{\m{f}=\begin{bmatrix}\m{h}\\\m{\beta}^*\odot\m{h}+\sbra{\m{\beta}^*-\m{\beta}^o}\odot\m{x}^o\end{bmatrix} \label{formu:f}}
we have
\equ{\twon{\m{f}_{T_j}}\leq\sqrt{1+r^2}\twon{\m{h}_{T_j}}+2r\twon{\m{x}_{T_j}^o}, \quad j=0,1,\cdots.\label{formu:formu3}}
Meanwhile,
\equ{\sum_{j\geq2}\twon{\m{x}_{T_j}^o}\leq\sum_{j\geq2}\onen{\m{x}_{T_j}^o}= \onen{\m{x}_{T_{01}^c}^o}\leq e_0.\label{formu:formu5}}
Applying the D-RIP, (\ref{formu:formu2}), (\ref{formu:formu3}) and then (\ref{formu:formu4}) and (\ref{formu:formu5}) it gives
\equ{\begin{split}
&\sbra{1-\bar{\delta}_{4k}}\twon{\m{f}_{T_{01}}}^2\leq\twon{\m{\Psi}\m{f}_{T_{01}}}^2 \\ \leq&\twon{\m{f}_{T_{01}}}\Big\{2\sqrt{1+\bar{\delta}_{4k}}\epsilon +\sqrt{2\sbra{1+r^2}}\bar{\delta}_{4k}\twon{\m{h}_{T_0}}\\ &\qquad\qquad+2\sqrt{2}\bar{\delta}_{4k}\mbra{\sqrt{1+r^2}k^{-1/2}+r}e_0 \Big\},\end{split}\notag}
and thus
\equ{\begin{split}
&\twon{\m{h}_{T_{01}}}\leq\twon{\m{f}_{T_{01}}}\\
\leq& c_1\epsilon +c_0\twon{\m{h}_{T_{01}}} +\sbra{c_2k^{-1/2}+c_3}e_0\end{split}\notag}
with $c_0\equiv\frac{\sqrt{2\sbra{1+r^2}}\bar{\delta}_{4k}}{1-\bar{\delta}_{4k}}$, $c_1\equiv\frac{2\sqrt{1+\bar{\delta}_{4k}}}{1-\bar{\delta}_{4k}}$, $c_2\equiv2c_0$ and $c_3\equiv\frac{2\sqrt{2}\bar{\delta}_{4k}r}{1-\bar{\delta}_{4k}}$. Hence, we get a bound
\equ{\twon{\m{h}_{T_{01}}}\leq\sbra{1-c_0}^{-1}\mbra{c_1\epsilon+\sbra{c_2k^{-1/2}+c_3}e_0},\notag}
which together with (\ref{formu:formu4}) gives
\equ{\begin{split}
&\twon{\m{h}}\leq\twon{\m{h}_{T_{01}}}+\twon{\m{h}_{T_{01}^c}} \\ \leq&2\twon{\m{h}_{T_{01}}}+2k^{-1/2}e_0\\
\leq& \frac{2c_1}{1-c_0}\epsilon +\mbra{\sbra{\frac{2c_2}{1-c_0}+2}k^{-1/2} +\frac{2c_3}{1-c_0}}e_0,\end{split}\label{formu:boundh}}
which concludes (\ref{formu:boundx}).

By (\ref{formu:Psifbound}), (\ref{formu:f}), (\ref{formu:boundh}) and the RIP we have
\[\begin{split}
&\mbra{1-\delta_k\sbra{\m{B}}}^{1/2}\twon{\sbra{\m{\beta}_{T_0}^*-\m{\beta}^o_{T_0}}\odot \m{x}_{T_0}^o}\\
\leq&\twon{\m{B}\mbra{\sbra{\m{\beta}_{T_0}^*-\m{\beta}^o_{T_0}}\odot\m{x}_{T_0}^o}}\\
=&\twon{\m{\Psi}\begin{bmatrix}\m{0}\\\sbra{\m{\beta}_{T_0}^*-\m{\beta}^o_{T_0}}\odot\m{x}_{T_0}^o\end{bmatrix}} \\ =&\twon{\m{\Psi}\sbra{\m{f}-\begin{bmatrix}\m{h}\\\m{\beta}^*\odot\m{h}\end{bmatrix}- \begin{bmatrix}\m{0}\\\sbra{\m{\beta}_{T_0^c}^*-\m{\beta}^o_{T_0^c}}\odot\m{x}_{T_0^c}^o\end{bmatrix}}}\\
\leq& \twon{\m{\Psi}\m{f}}+\twon{\m{\Psi}\begin{bmatrix}\m{h}\\\m{\beta}^*\odot\m{h}\end{bmatrix}} +\twon{\m{B}\mbra{\sbra{\m{\beta}_{T_0^c}^*-\m{\beta}^o_{T_0^c}}\odot\m{x}_{T_0^c}^o}}\\
\leq& 2\epsilon + \sqrt{1+r^2}\twon{\m{\Psi}}\twon{\m{h}} + 2r\twon{\m{B}}e_0\\
\leq& c_4\epsilon+\sbra{c_5k^{-1/2}+c_6}e_0
\end{split}\notag\]
with $\delta_k\sbra{\m{B}}\leq\delta_{2k}\sbra{\m{B}}\leq\bar{\delta}_{4k}$,  $c_4\equiv2+\frac{2\sqrt{1+r^2}\twon{\m{\Psi}}c_1}{1-c_0}$, $c_5\equiv \sqrt{1+r^2}\sbra{\frac{2c_2}{1-c_0}+2}\twon{\m{\Psi}}$ and $c_6\equiv \sbra{\frac{2\sqrt{1+r^2}c_3}{1-c_0}+2r}\twon{\m{\Psi}}$, and thus
\equ{\begin{split}
&\twon{\sbra{\m{\beta}_{T_0}^*-\m{\beta}^o_{T_0}}\odot \m{x}_{T_0}^o}\\
\leq& \frac{1}{\sqrt{1-\bar\delta_{4k}}}\mbra{c_4\epsilon+\sbra{c_5k^{-1/2}+c_6}e_0}, \end{split}\notag}
which concludes (\ref{formu:boundbetax}). We complete the proof by noting that the above results make sense if $c_0<1$, i.e.,
\equ{\bar{\delta}_{4k}<\frac{1}{\sqrt{2\sbra{1+r^2}}+1}.\notag}


\section*{Appendix C\\Proof of Lemma \ref{Lemma:continuousnullspace}}
We first consider the case where $\m{v}$ is an interior point of $\cD^*$, i.e., it holds that $\twon{\m{y}-\m{\Phi}^*\m{v}}=\epsilon_0<\epsilon$. Let $\eta=\epsilon-\epsilon_0$. Construct a sequence $\lbra{\m{v}^{\sbra{j}}}_{j=1}^{\infty}$ such that $\twon{\m{v}^{\sbra{j}}-\m{v}}\leq1/j$. It is obvious that $\m{v}^{\sbra{j}}\rightarrow\m{v}$. We next show that $\m{v}^{\sbra{j}}\in\cD^j$ as $j$ is large enough. By $\m{\Phi}^{\sbra{j}}\rightarrow\m{\Phi}^*$, $\m{v}^{\sbra{j}}\rightarrow\m{v}$ and that the sequence $\lbra{\m{v}^{\sbra{j}}}_{j=1}^{\infty}$ is bounded, there exists a positive integer $j_0$ such that, as $j\geq j_0$,
{\lentwo\equa{
\twon{\m{\Phi}^*-\m{\Phi}^{\sbra{j}}}\twon{\m{v}^{\sbra{j}}}&\leq&\eta/2 \notag\\
\twon{\m{\Phi}^*}\twon{\m{v}-\m{v}^{\sbra{j}}}&\leq&\eta/2.\notag}
}Hence, as $j\geq j_0$,
\[\begin{split}
&\twon{\m{y}-\m{\Phi}^{\sbra{j}}\m{v}^{\sbra{j}}}\\
=&\twon{\sbra{\m{y}-\m{\Phi}^*\m{v}} + \sbra{\m{\Phi}^*-\m{\Phi}^{\sbra{j}}}\m{v}^{\sbra{j}} + \m{\Phi}^*\sbra{\m{v}-\m{v}^{\sbra{j}}} }\\
\leq&\twon{\m{y}-\m{\Phi}^*\m{v}} + \twon{\sbra{\m{\Phi}^*-\m{\Phi}^{\sbra{j}}}\m{v}^{\sbra{j}}} + \twon{\m{\Phi}^*\sbra{\m{v}-\m{v}^{\sbra{j}}}}\\
\leq&\twon{\m{y}-\m{\Phi}^*\m{v}} + \twon{\m{\Phi}^*-\m{\Phi}^{\sbra{j}}}\twon{\m{v}^{\sbra{j}}}+ \twon{\m{\Phi}^*}\twon{\m{v}-\m{v}^{\sbra{j}}}\\
\leq&\epsilon_0+\eta/2+\eta/2=\epsilon,\end{split}\notag\]
from which we have $\m{v}^{\sbra{j}}\in\cD^j$ for $j\geq j_0$. By re-selecting arbitrary $\m{v}^{\sbra{j}}\in\cD^j$ for $j<j_0$ we obtain the conclusion.

For the other case where $\m{v}$ is a boundary point of $\cD^*$, there exists a sequence $\lbra{\m{v}_{\sbra{l}}}_{l=1}^{\infty}\subset\cD^*$ with all $\m{v}_{\sbra{l}}$ being interior points of $\cD^*$ such that $\m{v}_{\sbra{l}}\rightarrow\m{v}$, as $l\rightarrow+\infty$. According to the first part of the proof, for each $l=1,2,\cdots$, there exists a sequence $\lbra{\m{v}_{\sbra{l}}^{\sbra{j}}}_{j=1}^{\infty}$ with $\m{v}_{\sbra{l}}^{\sbra{j}}\in\cD^j$, $j=1,2,\cdots$, such that $\m{v}_{\sbra{l}}^{\sbra{j}}\rightarrow\m{v}_{\sbra{l}}$, as $j\rightarrow+\infty$. The sequence $\lbra{\m{v}_{\sbra{j}}^{\sbra{j}}}_{j=1}^{\infty}$ is what we expected since
\equ{\twon{\m{v}_{\sbra{j}}^{\sbra{j}}-\m{v}}\leq\twon{\m{v}_{\sbra{j}}^{\sbra{j}}-\m{v}_{\sbra{j}}} + \twon{\m{v}_{\sbra{j}}-\m{v}}\rightarrow0,\notag}
as $j\rightarrow+\infty$.

\section*{Appendix D\\Proof of Theorem \ref{thm:AA-P-BPDN_accumulisstation}}

We first show the existence of an accumulation point. It follows from the inequality
\equ{\twon{\m{y}-\sbra{\m{A}+\m{B}\m{\Delta}^{\sbra{j}}}\m{x}^{\sbra{j}}}\leq \twon{\m{y}-\sbra{\m{A}+\m{B}\m{\Delta}^{\sbra{j-1}}}\m{x}^{\sbra{j}}}\leq\epsilon\notag}
that $\m{x}^{\sbra{j}}$ is a feasible solution to the problem in (\ref{formu:solve_xj}), and thus $\onen{\m{x}^{\sbra{j+1}}}\leq\onen{\m{x}^{\sbra{j}}}$ for $j=1,2,\cdots$. Then we have $\onen{\m{x}^{\sbra{j}}}\leq\onen{\m{x}^{\sbra{1}}}\leq\onen{\m{A}^\dagger\m{y}}$ for $j=1,2,\cdots$, since $\m{A}^\dagger\m{y}$ is a feasible solution to the problem in (\ref{formu:solve_xj}) at the first iteration with the superscript $^\dagger$ denoting the pseudo-inverse operator. This together with $\m{\beta}^{\sbra{j}}\in\mbra{-r,r}^n$, $j=1,2,\cdots$, leads to that the sequence $\lbra{\sbra{\m{x}^{\sbra{j}},\m{\beta}^{\sbra{j}}}}_{j=1}^{\infty}$ is bounded. Thus, there exists an accumulation point $\sbra{\m{x}^*,\m{\beta}^*}$ of $\lbra{\sbra{\m{x}^{\sbra{j}},\m{\beta}^{\sbra{j}}}}_{j=1}^{\infty}$.

For the accumulation point $\sbra{\m{x}^*,\m{\beta}^*}$ there exists a subsequence $\lbra{\sbra{\m{x}^{\sbra{j_l}},\m{\beta}^{\sbra{j_l}}}}_{l=1}^{\infty}$ of $\lbra{\sbra{\m{x}^{\sbra{j}},\m{\beta}^{\sbra{j}}}}_{j=1}^{\infty}$ such that $\sbra{\m{x}^{\sbra{j_l}},\m{\beta}^{\sbra{j_l}}}\rightarrow \sbra{\m{x}^*,\m{\beta}^*}$, as $l\rightarrow+\infty$. By (\ref{formu:solve_betaj}), we have, for all $\m{\beta}\in\mbra{-r,r}^n$,
\equ{\twon{\m{y}-\sbra{\m{A}+\m{B}\m{\Delta}^{\sbra{j_l}}}\m{x}^{\sbra{j_l}}} \leq\twon{\m{y}-\sbra{\m{A}+\m{B}\m{\Delta}}\m{x}^{\sbra{j_l}}}, \notag}
at both sides of which by taking $l\rightarrow+\infty$, we have, for all $\m{\beta}\in\mbra{-r,r}^n$,
\equ{\twon{\m{y}-\sbra{\m{A}+\m{B}\m{\Delta}^*}\m{x}^*} \leq\twon{\m{y}-\sbra{\m{A}+\m{B}\m{\Delta}}\m{x}^*}, \notag}
which concludes (\ref{formu:condition_beta}).

For (\ref{formu:condition_x}), we first point out that $\onen{\m{x}^{\sbra{j}}}\rightarrow\onen{\m{x}^*}$, as $j\rightarrow+\infty$, since $\lbra{\onen{\m{x}^{\sbra{j}}}}_{j=1}^{\infty}$ is decreasing and $\m{x}^*$ is one of its accumulation points. As in Lemma \ref{Lemma:continuousnullspace}, let $\cD^j=\lbra{\m{x}: \twon{\m{y}-\sbra{\m{A}+\m{B}\m{\Delta}^{\sbra{j}}}\m{x}}\leq\epsilon}$ and $\cD^*=\lbra{\m{x}: \twon{\m{y}-\sbra{\m{A}+\m{B}\m{\Delta}^*}\m{x}}\leq\epsilon}$. By $\m{A}+\m{B}\m{\Delta}^{\sbra{j_l}}\rightarrow\m{A}+\m{B}\m{\Delta}^*$, as $l\rightarrow+\infty$, and Lemma \ref{Lemma:continuousnullspace}, for any $\m{x}\in\cD^*$ there exists a sequence $\lbra{\m{x}_{\sbra{l}}}_{l=1}^{\infty}$ with $\m{x}_{\sbra{l}}\in\cD^{j_l}$, $l=1,2,\cdots$, such that $\m{x}_{\sbra{l}}\rightarrow\m{x}$, as $l\rightarrow+\infty$. By (\ref{formu:solve_xj}), we have, for $l=1,2,\cdots$,
\equ{\onen{\m{x}^{\sbra{j_l+1}}}\leq\onen{\m{x}_{\sbra{l}}},\notag}
at both sides of which by taking $l\rightarrow+\infty$, we have
\equ{\onen{\m{x}^*}\leq\onen{\m{x}} \label{formu:formu7}}
since $\onen{\m{x}^{\sbra{j}}}\rightarrow\onen{\m{x}^*}$, as $j\rightarrow+\infty$, and $\m{x}_{\sbra{l}}\rightarrow\m{x}$, as $l\rightarrow+\infty$. Finally, (\ref{formu:condition_x}) is concluded as (\ref{formu:formu7}) holds for arbitrary $\m{x}\in\cD^*$.

\section*{Appendix E\\Proof of Theorem \ref{thm:AA-P-BPDN_optimalisstation}}

We need to show that an optimal solution $\sbra{\m{x}^*,\m{\beta}^*}$ satisfies (\ref{formu:condition_x}) and (\ref{formu:condition_beta}). It is obvious for (\ref{formu:condition_x}). For (\ref{formu:condition_beta}), we discuss two cases based on Lemma \ref{Lemma:BPDN_optimalx}. If $\twon{\m{y}}\leq\epsilon$, then $\m{x}^*=\m{0}$ and, hence, (\ref{formu:condition_beta}) holds for any $\m{\beta}^*\in\mbra{-r,r}^n$. If $\twon{\m{y}}>\epsilon$, $\twon{\m{y}-\sbra{\m{A}+\m{B}\m{\Delta}^*}\m{x}^*}=\epsilon$ holds by (\ref{formu:condition_x}) and Lemma \ref{Lemma:BPDN_optimalx}. Next we use contradiction to show that (\ref{formu:condition_beta}) holds in such case.

Suppose that (\ref{formu:condition_beta}) does not hold as $\twon{\m{y}}>\epsilon$. That is, there exists $\m{\beta}'\in\mbra{-r,r}^n$ such that
\equ{\twon{\m{y}-\sbra{\m{A}+\m{B}\m{\Delta}'}\m{x}^*}< \twon{\m{y}-\sbra{\m{A}+\m{B}\m{\Delta}^*}\m{x}^*}=\epsilon\notag}
holds with $\m{\Delta}'=\diag\sbra{\m{\beta}'}$. Then by Lemma \ref{Lemma:BPDN_optimalx} we see that $\m{x}^*$ is a feasible but not optimal solution to the problem
\equ{\min_{\m{x}}\onen{\m{x}}, \st\twon{\m{y}-\sbra{\m{A}+\m{B}\m{\Delta}'}\m{x}}\leq\epsilon.\label{formu:formu6}}
Hence, $\onen{\m{x}'}<\onen{\m{x}^*}$ holds for an optimal solution $\m{x}'$ to the problem in (\ref{formu:formu6}). Meanwhile, $\sbra{\m{x}',\m{\beta}'}$ is a feasible solution to the P-BPDN problem in (\ref{formu:l1_offgrid_Phiv}). Thus $\sbra{\m{x}^*,\m{\beta}^*}$ is not an optimal solution to the P-BPDN problem in (\ref{formu:l1_offgrid_Phiv}) by $\onen{\m{x}'}<\onen{\m{x}^*}$, which leads to contradiction.

\section*{Appendix F\\Deriavation of $\epsilon$ in Subsection \ref{sec:application}}
By (\ref{formu:offgridlinearization}), we have for $l=1,\cdots,m$, $j=1,\cdots,k$,
\equ{R_{lj}=\frac{A_{lj}''\sbra{\xi}}{2}\sbra{\theta_j-\tilde{\theta}_{l_j}}^2}
where $\xi$ is between $\theta_j$ and $\tilde{\theta}_{l_j}$, $A''_{lj}\sbra{\xi}=-\frac{\pi^2}{\sqrt{m}}\sbra{l-\frac{m+1}{2}}^2 \exp\lbra{i\pi\sbra{l-\frac{m+1}{2}}\xi}$, and $\abs{\theta_j-\tilde{\theta}_{l_j}}\leq\frac{1}{n}$. Thus, we have for $j=1,\cdots,k$,
\equ{\twon{\m{R}_j}\leq\frac{1}{2}\max\twon{\m{A}_j''}\cdot\frac{1}{n^2}= \frac{\pi^2}{8n^2}\sqrt{\frac{3m^4-10m^2+7}{15}}.}
Finally, it gives the expression of $\epsilon$ by observing that
\equ{\twon{\m{e}}=\twon{\m{R}\m{s}}\leq\frobn{\m{R}}\twon{\m{s}}\leq \sqrt{k}\twon{\m{R}_1}\twon{\m{s}}.}

\section*{Acknowledgement}
The authors would like to thank the anonymous reviewers for their valuable comments on this paper.




\end{document}